\newtheorem{thm}{Theorem}[]
\newtheorem{lemma}{Lemma}[]
\newtheorem{remark}{Remark}[]
\newtheorem{definition}{Definition}[]
\def\BibTeX{{\rm B\kern-.05em{\sc i\kern-.025em b}\kern-.08em
    T\kern-.1667em\lower.7ex\hbox{E}\kern-.125emX}}
\begin{document}
\title{Bandwidth-Constrained Sensor Scheduling: A Trade-off between Fairness and Efficiency}
\author{Yuxing Zhong, Yuchi Wu, Daniel E.~Quevedo, \IEEEmembership{Fellow,~IEEE}, and Ling Shi, \IEEEmembership{Fellow,~IEEE}
\thanks{Yuxing Zhong and Ling Shi are with the Department of Electronic and Computer Engineering, Hong Kong University of Science and Technology, Hong Kong, China (e-mail:~\href{yuxing.zhong@connect.ust.hk}{yuxing.zhong@connect.ust.hk};~\href{eesling@ust.hk}{eesling@ust.hk}).}
\thanks{Yuchi Wu is with the School of Mechatronic Engineering and Automation, Shanghai University, Shanghai, China (e-mail:~\href{wuyuchi1992@gmail.com}{wuyuchi1992@gmail.com}).}
\thanks{Daniel E.~Quevedo is with the School of Electrical and Computer Engineering, the University of Sydney, Sydney, Australia (e-mail:~\href{daniel.quevedo@sydney.edu.au}{daniel.quevedo@sydney.edu.au}).}
}
\maketitle

\begin{abstract}
We address fair sensor scheduling over bandwidth-constrained communication channels. While existing literature on fair scheduling overlooks overall system efficiency, we introduce a novel $q$-fairness framework to balance efficiency and fairness by adjusting the parameter $q$. Specifically, for two communication scenarios, we: (i) derive the optimal schedule under limited communication rates, and (ii) propose two suboptimal algorithms under limited simultaneous sensor transmissions and analyze their performance gaps relative to the optimal strategy. Simulations demonstrate that our algorithms effectively balance efficiency and fairness in both cases.
\end{abstract}
\begin{IEEEkeywords}
Fairness, Markov decision process, optimization, scheduling, and state estimation.
\end{IEEEkeywords}

\section{Introduction}
In wireless sensor networks, the sensors measure their environment and transmit relevant information to a remote state estimator for state estimation via wireless communications. These wireless sensor networks are widely used in applications like environmental monitoring and target tracking~\cite{yick2008wireless}. Although they offer benefits such as low cost and flexibility~\cite{mahmoud2018fundamental}, sensors are usually battery-powered and have limited communication bandwidth. Therefore, it is crucial to carefully schedule data transmissions to conserve resources.

There is an extensive body of literature on sensor scheduling problems. {It has been shown that the optimal scheduling strategy is periodic and exhibits a threshold-type property~\cite{han2017optimal,chakravorty2017fundamental,leong2016sensor}. Moreover, convex relaxation is widely used in determining scheduling strategies~\cite{maity2022sensor,vafaee2024learning}. However, these offline schedulers, while easy to implement, are less flexible and efficient than online approaches~\cite{astrom2002comparison}. In event-triggered scheduling, sensor transmissions are triggered based on real-time values such as measurements~\cite{han2015stochastic} and estimation error covariance~\cite{trimpe2014event}.  Research on sensor scheduling has also extended to scenarios involving lossy channels~\cite{zhong2023event,zhong2024event}, distributed systems~\cite{duan2022sensor}, and malicious attacks~\cite{liu2022rollout}.} However, the cost function in the above literature is the summation of the estimation errors across different systems. While minimizing this total cost yields efficient overall system performance, it often leads to unfair resource allocations. Specifically, some systems may enjoy low estimation errors while others suffer high errors.

When equal performance among individuals is desired, or when individuals prioritize their own benefits over overall system performance (i.e., lower total cost), it becomes crucial to equalize estimation errors across systems rather than simply minimizing their sum. This requires a fair scheduling algorithm. A key limitation of the aforementioned methods is that they do not explicitly address fairness, as their objective is to minimize the total cost without considering how the cost is distributed among systems. Modifying the summation into a weighted sum does not address the fairness issue due to the difficulty in selecting appropriate weights to equalize the cost. To the best of our knowledge, the first and only work on fair sensor scheduling was presented in~\cite{wu2020max}, which provides the ``fairest" (max-min fair) schedules~\cite{5461911}. However, this approach prioritizes fairness at the expense of overall system performance, which can be problematic when estimation accuracy is also critical. Therefore, a method that can balance efficiency (overall performance) and fairness is highly needed.

The present article builds on a significant body of work incorporating fairness considerations into algorithm design, where trade-offs between fairness and efficiency are achieved~\cite{li2019fair,khan2016fairness, mo2002fair}. Inspired by these studies, this article examines the bandwidth-constrained sensor scheduling problem for dynamical systems and aims to establish flexible trade-offs between efficiency and fairness. The main contributions are summarized as follows:
(i) We consider a $q$-fairness framework for remote state estimation in the sensor scheduling problem. Different from~\cite{wu2020max} which achieves maximum fairness, we achieve flexible trade-offs between efficiency and fairness by adjusting $q$.
(ii) We study two different types of bandwidth constraints, i.e., communication rate constraints and sensor activation constraints. For the rate constraints, we provide the optimal scheduling strategy. For activation constraints, we develop two algorithms obtained by the Markov decision process (MDP) and greedy heuristics and provide their performance gaps compared to the optimal strategy.

\emph{Notations:} The notation $\mathbb{R}^n$ and $\mathbb{R}^{n\times n}$ denote the set of $n$-dimensional vectors and $n$-by-$m$-dimensional matrices. The sets $\mathbb{Z}$, $\mathbb{N}$, and $\mathbb{N}_+$ represent the set of integers, natural numbers, and positive natural numbers, respectively. For a matrix $X$, $X^T$, ${\rm Tr}(X)$ and $\rho(X)$ are its transpose, trace and spectral radius, respectively. Positive semidefinite (definite) matrices are denoted by $X\succeq0$ ($X\succ0$). The notations $\bm 1$ and $I$ denote the all-ones vector and the identity matrix with compatible dimensions, respectively. The expectation of a random variable is denoted by $\mathbb{E}(\cdot)$. For a vector $x$, $x_+$ denotes its positive part, obtained by setting the non-positive components to zero. Additionally, $\lfloor\cdot\rfloor$ is the floor function, i.e., $\lfloor x\rfloor=\max\{n\in\mathbb{Z}|n\leq x\}$. Finally, $f^{(i)}(\cdot)$ denotes the $i$-fold composition of a function, and thus $f^{(i)}(\cdot) = f^{(i-1)}[f(\cdot)]$.

\section{Problem Formulation}\label{chap:formulation}
\subsection{System Model}
Consider the following $N$ independent discrete-time linear time-invariant systems. Each system is observed by a sensor,
 i.e., for $i = 1,\dots,N$:
\begin{equation*}
x_{k+1}^{[i]} 	= A_ix_k^{[i]} + w_k^{[i]},\qquad y_k^{[i]}	=C_ix_k + v_k^{[i]},
\end{equation*}
where $x_k^{[i]}\in\mathbb{R}^{n_i}$ is the state of the $i$-th system, $y_k^{[i]}\in\mathbb{R}^{m_i}$ is the noisy measurement obtained by the $i$-th sensor, $w^{[i]}_k\in\mathbb{R}^{n_i}$ and $v_k^{[i]}\in\mathbb{R}^{m_i}$ are mutually uncorrelated zero-mean Gaussian random variables with covariance $Q_i\succeq 0$ and $R_i\succ0$, respectively. The initial state $x^{[i]}_0$ is Gaussian distributed with mean zero and covariance $P_0^{[i]}\succeq0$, and is uncorrelated with $w^{[i]}_k$ and $v_k^{[i]}$ for all $k$. Assume $(A_i,\sqrt{Q_i})$ is stabilizable and $(A_i,C_i)$ is detectable.

\begin{figure}[!tbp]
	\centering
	\includegraphics[width=\linewidth]{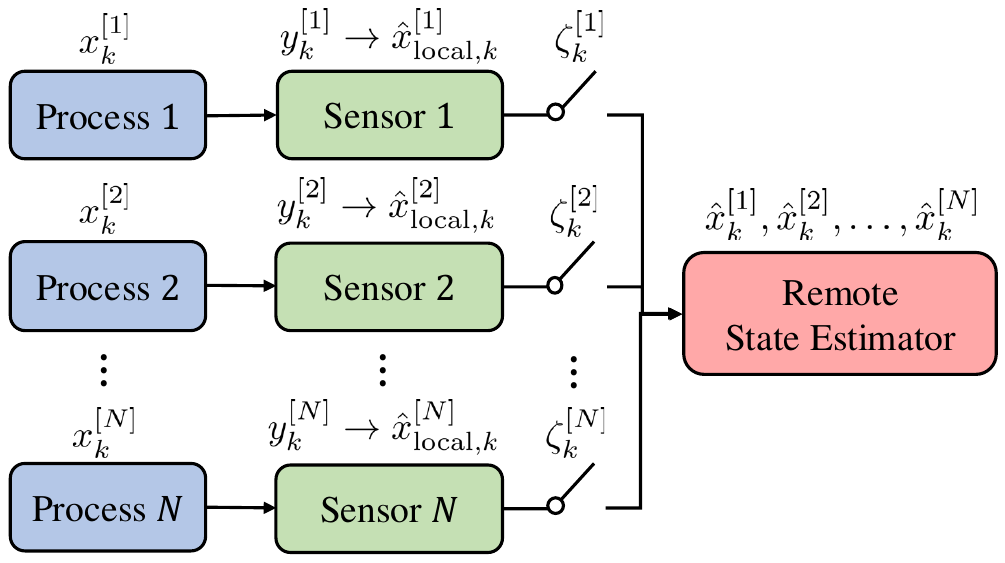}
	\caption{System diagram.}
	\label{fig:framework}
\end{figure}

We assume each sensor is capable of running a Kalman filter locally to compute the minimum mean square error (MMSE) estimate of $x^{[i]}_k$ based on its collected measurements, i.e.,
\begin{align*}
\hat{x}_{{\rm local},k}^{[i]} 			&\triangleq \mathbb{E}\left[x_k^{[i]}|y_0^{[i]},\dots,y_k^{[i]}\right],\\
\hat{P}_{{\rm local},k}^{[i]} 			&\triangleq \mathbb{E}\left[(x_k^{[i]}-\hat{x}_{{\rm local},k}^{[i]})(x_k^{[i]}-\hat{x}_{{\rm local},k}^{[i]})^T|y_0^{[i]},\dots,y_k^{[i]}\right],
\end{align*}
where $\hat{x}_{{\rm local},k}^{[i]}$ and $\hat{P}_{{\rm local},k}^{[i]} $ are the local estimate and the corresponding estimation error covariance, respectively.

For notational simplicity, define the following functions:
\begin{align*}
\tilde{g}_i(X)	&\triangleq X-XC_i^T[C_iXC_i^T+R_i]^{-1}C_iX,\\
h_i(X)		&\triangleq A_iXA_i^T+Q_i.
\end{align*}
According to~\cite{anderson2012optimal}, $\hat{P}_{{\rm local},k}^{[i]}$ converges exponentially to a steady-state $\bar{P}_{i}$, which is uniquely determined by solving the equation $\tilde{g}_i[ h_i(X)]=X$. Since we focus on asymptotic performance over an infinite time horizon, we assume, without loss of generality, that the Kalman filter reaches this steady state at $k=0$, i.e., $\hat{P}_{{\rm local},0}^{[i]}=\bar{P}_{i}$.

After computing the local estimate $\hat{x}_{{\rm local},k}^{[i]}$, the sensor transmits it to the remote state estimator via wireless communications. We assume that the communication is instantaneous and error-free. However, due to limited bandwidth, the sensors may not always be able to transmit their data. Let $\zeta_k^{[i]}\in\{0,1\}$ denote the binary variable that indicates whether the sensor sends its local estimate to the estimator: $\zeta_k^{[i]}=1$ if it transmits $\hat{x}_{{\rm local},k}^{[i]}$, and $\zeta_k^{[i]}=0$ otherwise. Moreover, denote $\tau_k^{[i]}$ as the time elapsed since the last transmission of the $i$-th sensor, i.e.,
\begin{equation*}
\tau_k^{[i]}\triangleq\min\{t\geq 0:\zeta^{[i]}_{k-t}=1\}.
\end{equation*}

Based on the above settings, the remote state estimator updates its MMSE estimate and the associated error covariance as follows~\cite{shi2010kalman}: $\hat{x}^{[i]}_k=\hat{x}^{[i]}_{{\rm local},k}$ and $P^{[i]}_k=\bar{P}_{i}$ if $\zeta^{[i]}_k=1$; $\hat{x}^{[i]}_k=A_i\hat{x}_{k-1}^{[i]}$ and $P^{[i]}_k=h_i(P^{[i]}_{k-1})=h_i^{(\tau_k^{[i]})}(\bar{P}_{i})$ otherwise.
\subsection{Bandwith-Constrained Communications}
As mentioned above, since the bandwidth is limited, the communications between the sensors and the remote state estimator are constrained. We consider two types of constraints.
 
\noindent{\bf Case 1} (Communication rate constraints): The total communication rate $r_i\triangleq\lim_{T\to\infty}\frac{1}{T+1}\sum_{k=0}^T\mathbb{E}[\zeta_k^{[i]}]$ is limited, i.e.,
\begin{equation}\label{eq:case1}
\sum_{i=1}^Nr_i\leq R,\qquad 0\leq r_i\leq 1,\quad i = 1,\dots,N,
\end{equation}
where $R>0$ can be interpreted as the total bandwidth resources shared by sensors.

\noindent{\bf Case 2} (Sensor activation constraints): At each time, only a limited number of sensors are allowed for transmission, i.e.,
\begin{equation}\label{eq:case2}
\sum_{i=1}^N\zeta_k^{[i]}\leq Z,\quad k=0,1,2,\ldots,
\end{equation}
where $Z\in\mathbb{N}_+$ is the maximum number of sensors allowed to transmit data simultaneously.

\begin{remark}{\rm
Intuitively speaking, both cases limit the number of non-zero elements of $\zeta_k^{[i]}$. In particular, Case 1 focuses on the long-term behavior of $\zeta_k^{[i]}$, while Case 2 addresses its behavior at each time $k$.
}\end{remark}

\subsection{Problem of Interest}
Denote the cost of the $i$-th sensor as:
\begin{equation*}
J_i(\bm{\zeta}_i )\triangleq\limsup_{T\to\infty}\frac{1}{T+1}\sum_{k=0}^T{\rm Tr}(P_k^{[i]}),
\end{equation*}
where $\bm{\zeta}_i 	\triangleq\{\zeta_0^{[i]},\dots,\zeta_k^{[i]},\dots\}$.

Most existing literature on sensor scheduling aims to maximize system efficiency by minimizing the total cost, i.e.,
\begin{equation}\label{eq:E}
\tag{E}
\min_{\bm \zeta} 	 \sum_{i=1}^N J_i({\bm \zeta}_i)\qquad {\rm s.t.}~ \text{\eqref{eq:case1} or \eqref{eq:case2}},
\end{equation}
where $\bm{\zeta}\triangleq\{\bm{\zeta}_1,\dots,\bm{\zeta}_N \}$. However, the cost function in~\eqref{eq:E} reflects the overall system performance and neglects fairness, potentially leading to an unbalanced allocation of resources among sensors. To prioritize fairness explicitly, a min-max criterion can be adopted, aiming to minimize the worst-case cost experienced by any individual sensor:
\begin{equation}\label{eq:F}
\tag{F}
\min_{\bm \zeta}\max_{i=1,\dots,N} 	 \sum_{i=1}^N J_i({\bm \zeta}_i)\qquad {\rm s.t.}~\text{\eqref{eq:case1} or \eqref{eq:case2}}.
\end{equation}

However, the above problems either focus on efficiency~\eqref{eq:E} or fairness~\eqref{eq:F}, which may not be desirable if we want to achieve trade-offs between both criteria. Fig.~\ref{fig:fair} provides an illustrative example of~\eqref{eq:E} and~\eqref{eq:F}.
\begin{figure}[!htbp]
	\centering
	\includegraphics[width=\linewidth]{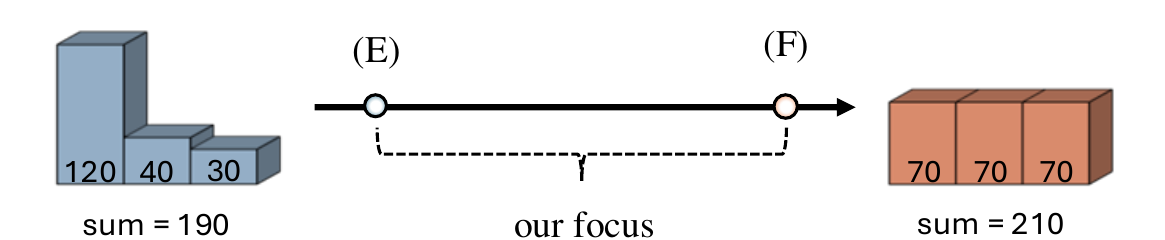}
	\caption{An illustrative example of~\eqref{eq:E} and~\eqref{eq:F}. In the case of~\eqref{eq:E}, the total cost $ \sum_{i=1}^N J_i(\bm{\zeta}_i)$ is minimized, but the distribution of costs across individuals is highly uneven, leading to unfair outcomes. Conversely,~\eqref{eq:F} achieves fairness by equalizing the costs among individuals, albeit at the expense of a higher overall cost.}
	\label{fig:fair}
\end{figure}

In the remainder of this article, we investigate how to design ${\bm\zeta}$ so that we can achieve a satisfying trade-off between efficiency and fairness under communication rate constraints~\eqref{eq:case1} or sensor activation constraints~\eqref{eq:case2}.
\section{Fairness Framework}\label{chap:fairness}
As mentioned, although the max-min fairness is believed to yield the ``fairest'' solution~\cite{wu2020max, 5461911}, it overlooks the overall system performance and may not be ideal when the estimation accuracy is also crucial. In order to address this issue, we instead employ the $q$-fairness framework~\cite{li2019fair}. Specifically, we resort to the following optimization problem:
\begin{equation}\label{eq:alpha}
\min_{\bm \zeta} \sum_{i=1}^N f_{q}(J_i({\bm \zeta}_i))\qquad {\rm s.t.}~\text{\eqref{eq:case1} or \eqref{eq:case2}},
\end{equation}
where 
\begin{equation*}
f_{q}(x)=\frac{x^{1+q}}{1+q},\quad 0\leq q<\infty.
\end{equation*}
\begin{remark}{\rm
The $q$-fairness framework is inspired by the well-known $\alpha$-fairness~\cite{mo2002fair}. For completeness, additional background on $\alpha$-fairness is provided in Appendix~\ref{apx:alpha}.
}\end{remark}
The function $f_{q}(x)$ achieves a trade-off between fairness and efficiency for the following reasons: (i) The function $f_{q}(x)$ is increasing in $x$, thus reflecting the efficiency of the system. (ii) The convex nature of $f_{q}(x)$ indicates {\it diminishing marginal returns} from the allocated resources. As shown by the yellow curve in {Fig.~\ref{fig:q}}, this property encourages ``richer'' sensors with lower costs to donate resources to ``poorer'' sensors with higher costs to achieve a better overall objective value, thereby promoting fairness.


The value of $q$ affects the balance between fairness and efficiency in the following aspects: (i) When $q=0$, \eqref{eq:alpha} is equivalent to~\eqref{eq:E}. As $q$ increases, larger $J_i({\bm \zeta}_i)$ becomes more dominant in the objective value, causing the solution to \eqref{eq:alpha} to approach that of \eqref{eq:F}. (ii) Intuitively, increasing $q$ leads to a fairer solution, as it amplifies the diminishing marginal effect, encouraging ``richer'' sensors to donate more resources.

{Fig.~\ref{fig:q}} provides a visual illustration of the function $f_q(x)$ for different values of $q$.
\begin{figure}[!htbp]
	\centering
	\includegraphics[width=\linewidth]{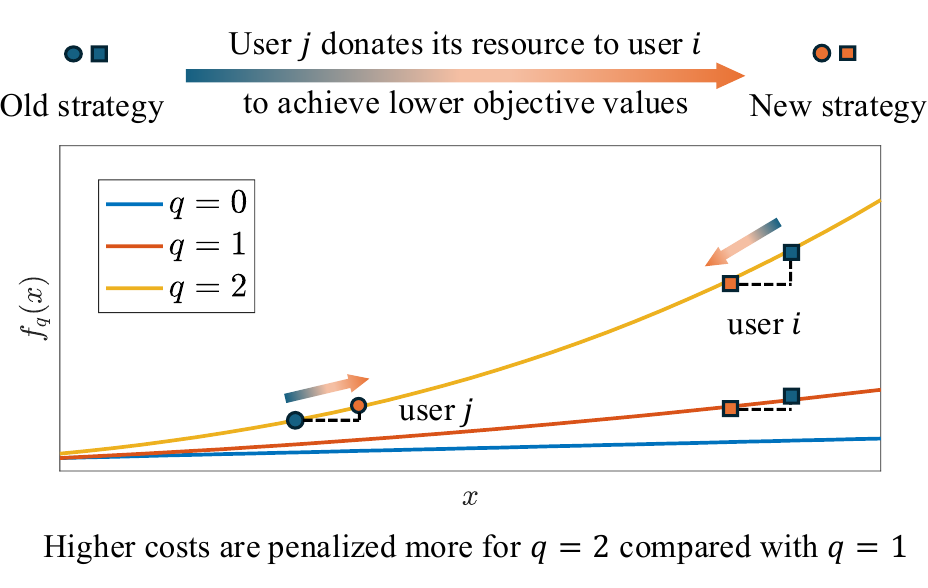}
	\caption{Illustration of the function $f_q(x)$.}
	\label{fig:q}
\end{figure}

\begin{remark}{\rm
While Fig.~\ref{fig:q} intuitively illustrates how varying $q$ impacts the overall trade-off, a formal theoretical analysis is beyond the scope of this work and remains an open question in the existing literature~\cite{mo2002fair,5461911}. Nevertheless, the numerical results in Section~\ref{chap:sim} align with our intuition that increasing $q$ leads to a fairer solution.
}\end{remark}

\begin{remark}{\rm
    The $q$-fairness framework can be readily extended to support sensor-specific penalties through personalized parameters $q_i$. More specifically, since a higher $q_i$ value imposes a greater penalty on higher costs, it can be assigned to a critical sensor to aggressively penalize its high estimation error. Importantly, the developed algorithms and theoretical results remain valid under this generalization.
}\end{remark}
%
\section{Case 1: Communication Rate Constraints}\label{sec:bandwidth}
In this section, we provide fair scheduling strategies under communication rate constraints. Specifically, we solve~\eqref{eq:alpha} optimally under the constraint~\eqref{eq:case1}.

\subsection{Problem Decomposition}
To solve~\eqref{eq:alpha}, we decompose the problem into two steps: (i) determine the optimal scheduling policy under a given allocated communication rate $\tilde{r}_i$; (ii) 
investigate how to optimally allocate the communication rate $\tilde{r}_i$ under the constraints $\sum\tilde{r}_i\leq R$ and $0\leq\tilde{r}_i\leq 1$.

Given the communication rate $\tilde{r}_i$, the optimal policy for a single sensor is determined by the following problem:
\begin{equation}\label{eq:single}
\min_{{\bm \zeta}_i} 	J_i({\bm \zeta}_i)\qquad{\rm s.t.}~ r_i\leq\tilde{r}_i.
\end{equation}
\begin{lemma}[Theorem 4~\cite{chakravorty2017fundamental}]{\rm
The solution to~\eqref{eq:single} takes the following form:
\begin{equation}\label{eq:strategy}
\zeta_k^{[i]}=
\begin{cases}
0, 		& \text{if $\tau_k^{[i]}<\eta_i$};\\
0, 		& \text{with probability $1-p_i$, if $\tau_k^{[i]}=\eta_i$}; \\
1, 		& \text{with probability $p_i$, if $\tau_k^{[i]}=\eta_i$}; \\
1, 		& \text{if $\tau_k^{[i]}>\eta_i$}, 
\end{cases}
\end{equation}
where
\begin{equation}
\eta_i=\lfloor\frac{1}{\tilde{r}_i}-1\rfloor,\quad p_i=\eta_i+2-\frac{1}{\tilde{r}_i}. \tag*{$\square$}
\end{equation}
}\end{lemma}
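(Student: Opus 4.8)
The plan is to exploit the fact that, for a single sensor, the remote error covariance is a deterministic function of the age $\tau_k^{[i]}$, namely $P_k^{[i]} = h_i^{(\tau_k^{[i]})}(\bar P_i)$, so that the per-stage cost depends only on this scalar quantity. Writing $c(\tau)\triangleq{\rm Tr}\big(h_i^{(\tau)}(\bar P_i)\big)$, problem~\eqref{eq:single} reduces to an average-cost sequential decision problem on the age state $\tau\in\mathbb{N}$, with two actions---transmit (reset the age to $0$) or wait (increment the age)---subject to the long-run transmission-rate budget $r_i\le\tilde r_i$. The target is to show that an age-based \emph{threshold policy}, randomized at a single threshold, is optimal, and to pin down $\eta_i$ and $p_i$.

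First I would establish the monotonicity of the per-stage cost, which is the structural property driving the whole argument. Since $\bar P_i$ solves $\tilde g_i[h_i(X)]=X$ and the measurement update satisfies $\tilde g_i(X)\preceq X$, we get $\bar P_i=\tilde g_i[h_i(\bar P_i)]\preceq h_i(\bar P_i)$; because $h_i(\cdot)$ is monotone in the positive semidefinite order, induction yields $h_i^{(\tau)}(\bar P_i)\preceq h_i^{(\tau+1)}(\bar P_i)$ and hence $c(\tau)\le c(\tau+1)$. Monotonicity guarantees that waiting longer is costlier, which is what makes a threshold on the age the natural candidate for optimality.

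Next I would characterize the achievable rate--cost pairs through the renewal structure. A deterministic policy that transmits exactly once every $n$ steps induces inter-transmission intervals of fixed length $n$, with rate $1/n$ and average cost $\frac1n\sum_{\tau=0}^{n-1}c(\tau)$; these policies yield a discrete family of points $\big(1/n,\,\frac1n\sum_{\tau=0}^{n-1}c(\tau)\big)$. For a budget $\tilde r_i$ whose reciprocal is non-integer, the randomized threshold policy~\eqref{eq:strategy} produces inter-transmission intervals taking the two consecutive values $\eta_i+1$ and $\eta_i+2$, with probabilities $p_i$ and $1-p_i$; computing the stationary age distribution and imposing that the mean interval equal $1/\tilde r_i$ forces the rate constraint to bind and gives precisely $\eta_i=\lfloor 1/\tilde r_i-1\rfloor$ and $p_i=\eta_i+2-1/\tilde r_i$, with the resulting cost lying on the chord joining the two adjacent deterministic points.

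To certify global optimality I would pass to a Lagrangian relaxation: for a multiplier $\lambda\ge0$, consider the unconstrained average-cost Markov decision process minimizing $\limsup_{T\to\infty}\frac1{T+1}\sum_{k=0}^{T}\big[c(\tau_k^{[i]})+\lambda\zeta_k^{[i]}\big]$ on the age state. Using the average-cost optimality equation together with the monotonicity of $c$, one shows the relative value function is monotone in the age, so the optimal unconstrained policy is a pure threshold that transmits once the age reaches some $n(\lambda)$; sweeping $\lambda$ traces out the lower convex envelope of the deterministic points, whose vertices are consecutive periods, and strong duality for constrained average-cost MDPs (no duality gap, owing to the convexity of the achievable region that randomization makes available) certifies that the randomized-threshold policy meeting the budget with equality attains the optimum. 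I expect this last step to be the principal obstacle: one must argue that restricting to stationary age-based policies loses nothing against arbitrary history-dependent randomized strategies and that the convexified frontier is genuinely attained---the monotonicity estimate and the renewal computations are then routine.
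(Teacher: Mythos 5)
The paper never proves this lemma: it is imported wholesale as Theorem~4 of~\cite{chakravorty2017fundamental} (hence the $\square$ with no proof environment following), so your sketch is competing with the cited source rather than with an in-paper argument. Your route is essentially the one that source takes: reduce to an average-cost decision problem on the age $\tau$, establish monotonicity of $c(\tau)={\rm Tr}\big(h_i^{(\tau)}(\bar P_i)\big)$ from $\tilde g_i(X)\preceq X$ and the order-monotonicity of $h_i$, identify the deterministic period-$n$ policies with the rate--cost points $\big(1/n,\tfrac1n\sum_{\tau=0}^{n-1}c(\tau)\big)$, and certify via a Lagrangian/threshold argument that randomizing between the two adjacent deterministic policies bracketing $\tilde r_i$ is optimal. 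Your bookkeeping is consistent with the stated formulas: intervals of length $\eta_i+1$ and $\eta_i+2$ with mean $\eta_i+2-p_i=1/\tilde r_i$ give exactly $p_i=\eta_i+2-1/\tilde r_i$ with $p_i\in(0,1]$ forcing $\eta_i=\lfloor 1/\tilde r_i-1\rfloor$; and monotonicity of $c$ is precisely what makes the interpolated frontier $\tilde J_i(\cdot)$ convex (the slope on $[1/(n+1),1/n]$ is $\sum_{j=0}^{n-1}c(j)-nc(n)$, and these slopes are nondecreasing iff $c(n+1)\ge c(n)$), which justifies your claim that the envelope's vertices sit at consecutive periods. The gap is the one you name yourself: the sufficiency of stationary age-based policies against arbitrary history-dependent randomized strategies, the verification of the average-cost optimality equation on the countable age space, and the vanishing duality gap for the constrained average-cost problem. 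These occupy the bulk of the proof in~\cite{chakravorty2017fundamental} and cannot be dispatched by appeal to ``standard'' results without checking their hypotheses, so as written your argument is a sound outline rather than a complete proof; everything else in it is correct and matches the cited derivation.
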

To analyze the multi-sensor scenario, let $\beta_i=\lfloor\frac{1}{\tilde{r}_i}\rfloor\in\mathbb{N}_+$. It is easy to verify that
\begin{equation*}
\tilde{J}_i(\tilde{r}_i)=\left(1-\tilde{r}_i\beta_i\right){\rm Tr}[h_i^{(\beta_i)}(\bar{P}_i)]+\tilde{r}_i\sum_{j=0}^{\beta_i-1}{\rm Tr}[h_i^{(j)}(\bar{P}_i)].
\end{equation*}

While the cost  $\tilde{J}_i({r}_i)$ remains bounded for any $0< r_i\leq 1$ for systems with $\rho(A_i)\geq1$, this condition alone is insufficient to demonstrate the convergence of the primal-dual subgradient methods (to be introduced later). To ensure the boundness of the subgradients of $g(\bm{\tilde{r}})$ in~\eqref{eq:opt} and thus guarantee the convergence of the algorithm, we slightly modify the constraint $0\leq r_i\leq 1$ in~\eqref{eq:case1} to $\underline{r}_i\leq r_i\leq 1$, with $\underline{r}_i=0$ if $\rho(A_i)<1$ and $\underline{r}_i=\epsilon$ if $\rho(A_i)\geq1$, where $\epsilon$ is a positive but small number.

Let
\begin{equation*}
\begin{aligned}
\bm{\tilde{r}}&=\begin{bmatrix}\tilde{r}_1&\cdots&\tilde{r}_N\end{bmatrix},
\qquad\quad{\underline{\bm r}}=\begin{bmatrix}\underline{r}_1&\cdots&\underline{r}_N\end{bmatrix},\\
\mathcal{A} &= 
\begin{bmatrix}
\bm{1}_N &{I}_N &-{I}_N
\end{bmatrix}^T,\quad
{\bm b}=\begin{bmatrix}
R	&	\bm{1}_N^T		&\bm{\underline{r}}^T
\end{bmatrix}^T.
\end{aligned}
\end{equation*}
Then,~\eqref{eq:alpha} can be transformed into
\begin{equation}\label{eq:opt}
\min_{\tilde{\bm{r}}}g(\bm{\tilde{r}})\qquad
{\rm s.t.}						~\mathcal{A}\tilde{\bm{r}}\leq {\bm b},
\end{equation}
where $g(\bm{\tilde{r}})=\sum_{i=1}^Nf_{q}(\tilde{J}_i(\tilde{r}_i))$.
\subsection{Primal-Dual Subgradient Method}
To solve~\eqref{eq:opt}, we resort to the following augmented optimization problem:
\begin{equation*}
\min_{\tilde{\bm{r}}}g(\bm{\tilde{r}})+\frac{\alpha}{2}\|(\mathcal{A}\tilde{\bm{r}}-\bm{b})_+\|_2^2\qquad
{\rm s.t.}~\mathcal{A}\tilde{\bm{r}}\leq \bm{b},
\end{equation*}
where $\alpha>0$ is a regularization parameter. The regularization term penalizes constraint violations without altering the optimal solution. The associated Lagrangian is
\begin{equation*}
\mathcal{L}(\bm{\tilde{r}},{\bm \nu})	=g(\bm{\tilde{r}})+{\bm\nu}^T(\mathcal{A}\tilde{\bm{r}}- \bm{b})+\frac{\alpha}{2}\|(\mathcal{A}\tilde{\bm{r}}- \bm{b})_+\|_2^2,			
\end{equation*}
where ${\bm\nu}\geq0$ is the dual variable.

Define a mapping $T(\cdot,\cdot)$ by
\begin{equation*}
\begin{aligned}
T(\bm{\tilde{r}},{\bm \nu})&=
\begin{bmatrix}
\partial_{\tilde{\bm r}}\mathcal{L}(\bm{\tilde{r}},{\bm \nu})\\
-\partial_{\bm \nu}\mathcal{L}(\bm{\tilde{r}},{\bm \nu})
\end{bmatrix}\\
&=
\begin{bmatrix}
\partial g(\bm{\tilde{r}})+\mathcal{A}^T\bm{\nu}+\alpha\mathcal{A}^T(\mathcal{A}\tilde{\bm{r}}-\bm{b})_+\\
\bm{b}-\mathcal{A}\tilde{\bm{r}}
\end{bmatrix},
\end{aligned}
\end{equation*}
where $\partial g(\bm{\tilde{r}})$ is the subdifferential of $g(\tilde{\bm{r}})$. It is obvious that $\begin{bmatrix}\kappa_1&\cdots&\kappa_N\end{bmatrix}\in\partial g(\bm{\tilde{r}})$, where
\begin{equation*}
\kappa_i=\tilde{J}_i^q(\tilde{r}_i)\left[\sum_{j=0}^{\beta_i-1}{\rm Tr}[h^{(j)}(\bar{P}_i)]-\beta_i {\rm Tr}[h_i^{(\beta_i)}(\bar{P}_i)]\right].
\end{equation*}

Applying the primal-dual subgradient method, for each iteration $t$ of the algorithm, we update the primal and dual variables by
\begin{align}
\bm{\tilde{r}}(t+1)	&=\bm{\tilde{r}}(t)-w(t)\partial_{\tilde{\bm r}}\mathcal{L}(\bm{\tilde{r}}(t),{\bm \nu}(t)),\label{eq:primal_var}\\
			\bm{\nu}(t+1)		&=\bm{\nu}(t)-w(t)\partial_{\bm \nu}\mathcal{L}(\bm{\tilde{r}}(t),{\bm \nu}(t)),\label{eq:dual_var}
\end{align}
where $w(t)=\gamma(t)/\|T(\bm{\tilde{r}}(t),{\bm \nu(t)})\|_2$ is the step size such that
\begin{equation}\label{eq:step}
\gamma(t)>0,\qquad \sum_{t=1}^{\infty}\gamma(t)=\infty,\qquad \sum_{t=1}^{\infty}\gamma^2(t)<\infty.
\end{equation}

\begin{lemma}\label{lemma:converge}{\rm
For any step size $w(t)$ that satisfies~\eqref{eq:step}, the update rules~\eqref{eq:primal_var}-\eqref{eq:dual_var} converge, i.e.,
\begin{equation*}
\lim_{t\to\infty}g(\tilde{\bm r}(t))\to g(\tilde{\bm r}^{\star}), \qquad \lim_{t\to\infty}\|[\mathcal{A}\tilde{\bm r}(t)-{\bm b}]_+\|_2\to0,
\end{equation*}
where  $g(\tilde{\bm r}^{\star})$ is the optimal value of~\eqref{eq:opt}, and $\bm{\tilde{r}}^{\star}$ is any solution that achieves the optimal value.

}\end{lemma}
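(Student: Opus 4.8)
The plan is to read \eqref{eq:primal_var}--\eqref{eq:dual_var} as a single subgradient iteration $\bm{z}(t+1)=\bm{z}(t)-w(t)\,s(t)$ driven by the monotone operator $T$ attached to the convex--concave saddle function $\mathcal{L}$, and then show the iterates $\bm{z}(t)=(\tilde{\bm{r}}(t),{\bm\nu}(t))$ are attracted to a saddle point. First I would assemble the structural ingredients. Since $f_q$ is convex and nondecreasing on $[0,\infty)$ and each $\tilde{J}_i$ is convex in $\tilde{r}_i$, the objective $g$ is convex; hence $\mathcal{L}(\cdot,{\bm\nu})$ is convex for fixed ${\bm\nu}\geq 0$ while $\mathcal{L}(\tilde{\bm{r}},\cdot)$ is affine, hence concave. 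The penalty $\frac{\alpha}{2}\|(\mathcal{A}\tilde{\bm{r}}-\bm{b})_+\|_2^2$ is convex and differentiable in $\tilde{\bm{r}}$ and constant in ${\bm\nu}$, so it preserves this convex--concave structure and vanishes at any feasible point. Because $\{\mathcal{A}\tilde{\bm{r}}\leq\bm{b}\}$ is a nonempty compact polytope for which Slater's condition holds, a saddle point $\bm{z}^\star=(\tilde{\bm{r}}^\star,{\bm\nu}^\star)$ exists and $\tilde{\bm{r}}^\star$ solves \eqref{eq:opt}. Crucially, the modification $\underline{r}_i=\epsilon>0$ for $\rho(A_i)\geq 1$ keeps each $\tilde{J}_i$ and its one-sided derivatives finite on the feasible box, so the subgradients $\kappa_i$, and therefore the selected $s(t)\in T(\bm{z}(t))$, stay uniformly bounded along the iterates.

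The core estimate comes from the monotonicity of $T$. Expanding the update gives
\begin{equation*}
\|\bm{z}(t+1)-\bm{z}^\star\|_2^2=\|\bm{z}(t)-\bm{z}^\star\|_2^2-2w(t)\,s(t)^T(\bm{z}(t)-\bm{z}^\star)+w(t)^2\|s(t)\|_2^2.
\end{equation*}
Convexity of $\mathcal{L}(\cdot,{\bm\nu})$ and concavity of $\mathcal{L}(\tilde{\bm{r}},\cdot)$ make $T$ monotone, i.e. $(s(t)-s^\star)^T(\bm{z}(t)-\bm{z}^\star)\geq 0$ for $s^\star\in T(\bm{z}^\star)$; the saddle-point variational inequality $s^{\star T}(\bm{z}(t)-\bm{z}^\star)\geq 0$ then yields $s(t)^T(\bm{z}(t)-\bm{z}^\star)\geq 0$, so the cross term never increases the distance to $\bm{z}^\star$. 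Substituting the normalized step size $w(t)=\gamma(t)/\|s(t)\|_2$ reduces the recursion to
\begin{equation*}
\|\bm{z}(t+1)-\bm{z}^\star\|_2^2\leq\|\bm{z}(t)-\bm{z}^\star\|_2^2-2\gamma(t)\frac{s(t)^T(\bm{z}(t)-\bm{z}^\star)}{\|s(t)\|_2}+\gamma^2(t).
\end{equation*}

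Summing this recursion and invoking $\sum_t\gamma^2(t)<\infty$ from \eqref{eq:step} shows $\|\bm{z}(t)-\bm{z}^\star\|_2$ is bounded and the nonnegative middle terms are summable; combined with $\sum_t\gamma(t)=\infty$ this forces $\liminf_t s(t)^T(\bm{z}(t)-\bm{z}^\star)/\|s(t)\|_2=0$. Boundedness of $\|s(t)\|_2$ from above and below then pins a subsequence of $\bm{z}(t)$ to the saddle set, and the distance-decreasing (quasi-monotone) recursion upgrades this to convergence of the whole sequence. I would finally translate iterate convergence into the two stated claims: continuity of $g$ gives $g(\tilde{\bm{r}}(t))\to g(\tilde{\bm{r}}^\star)$, and the limiting KKT/complementarity conditions together with ${\bm\nu}\geq 0$ give $\|(\mathcal{A}\tilde{\bm{r}}(t)-\bm{b})_+\|_2\to 0$.

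The main obstacle is disentangling the feasibility claim from the objective claim. Monotonicity and the diminishing step size readily deliver convergence of $\bm{z}(t)$ to the saddle set, but showing that the residual $\|(\mathcal{A}\tilde{\bm{r}}(t)-\bm{b})_+\|_2$ actually vanishes, rather than merely staying bounded, requires exploiting the coupling between the dual ascent on ${\bm\nu}$ and the quadratic penalty: a persistent constraint violation would drive ${\bm\nu}(t)$ to grow without bound, contradicting the boundedness of $\|\bm{z}(t)-\bm{z}^\star\|_2$ established above. Making this argument precise, together with verifying the uniform lower bound on $\|s(t)\|_2$ that legitimizes the normalized step size, are the two points demanding the most care; the convexity of $\tilde{J}_i$ and the subgradient bound guaranteed by $\underline{r}_i=\epsilon$ are exactly what make them go through.
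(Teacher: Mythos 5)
Your setup and the first half of your argument coincide with the paper's: convexity of $g$ via the composition of the nondecreasing convex $f_q$ with the piecewise-linear $\tilde{J}_i$, boundedness of the subgradients thanks to $\underline{r}_i=\epsilon$, the expansion of $\|\bm{z}(t+1)-\bm{z}^\star\|_2^2$, and the summation argument that combines $\sum_t\gamma^2(t)<\infty$ with $\sum_t\gamma(t)=\infty$ to force the normalized cross term $T^T(\bm{z}(t))(\bm{z}(t)-\bm{z}^\star)/\|T(\bm{z}(t))\|_2$ to vanish along the sequence. Up to that point you are reproducing the paper's proof.

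The gap is in what you do with the cross term afterwards. You lower-bound it only by $0$, using generic monotonicity of $T$ plus the saddle-point variational inequality, and then assert that $\liminf_t T^T(\bm{z}(t))(\bm{z}(t)-\bm{z}^\star)=0$ ``pins a subsequence of $\bm{z}(t)$ to the saddle set.'' That implication is false for general monotone saddle operators: for a bilinear Lagrangian the quantity $T^T(\bm{z})(\bm{z}-\bm{z}^\star)$ can vanish identically at points arbitrarily far from the saddle set (this is precisely why plain gradient descent--ascent fails on bilinear problems), so vanishing of the cross term tells you nothing about cluster points without additional structure. The paper avoids this by proving the sharper inequality
\begin{equation*}
T^T(\bm{z}(t))(\bm{z}(t)-\bm{z}^{\star})\;\geq\;\mathcal{L}(\bm{\tilde{r}}(t),\bm{\nu}^{\star})-\mathcal{L}(\bm{\tilde{r}}^{\star},\bm{\nu}^{\star})+\tfrac{\alpha}{2}\|(\mathcal{A}\bm{\tilde{r}}(t)-\bm{b})_+\|_2^2\;\geq\;0,
\end{equation*}
obtained from the subgradient inequality for $g$ together with the explicit form of the augmentation term (and the relation $\mathcal{A}\bm{\tilde{r}}^\star=\bm{b}$). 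This single bound is the whole point of adding the quadratic penalty: once the left-hand side tends to zero, both nonnegative summands on the right must tend to zero separately, which yields $\mathcal{L}(\bm{\tilde{r}}(t),\bm{\nu}^\star)\to g(\bm{\tilde{r}}^\star)$ and $\|(\mathcal{A}\bm{\tilde{r}}(t)-\bm{b})_+\|_2\to0$ simultaneously, with no need for iterate convergence, KKT limits, or the ``$\bm{\nu}(t)$ would blow up'' contradiction you sketch but do not carry out. To repair your proposal, replace the bare monotonicity bound with this structured lower bound on the cross term; the rest of your machinery then becomes unnecessary.
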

\begin{proof}
The proof relies on the convexity of~\eqref{eq:opt} and the boundedness of $g(\tilde{\bm{r}})$. Interested readers may refer to Appendix~\ref{apx:converge} for the complete proof.
\end{proof}
{Algorithm~\ref{alg:primal dual} summarizes the detailed procedure\footnote{As established in~\cite{nedic2009subgradient,goffin1977convergence}, the primal-dual subgradient algorithm achieves a sublinear convergence rate of $O(1/\sqrt{t})$. Its computational complexity per iteration is dominated by matrix multiplications yet avoids the more costly operation of matrix inversions, contributing to its efficiency.} for computing $\tilde{\bm r}^{\star}$}. Once $\tilde{\bm r}^{\star}$ is obtained, the scheduling strategy can be recovered by~\eqref{eq:strategy}. This yields a fair schedule $\bm{\zeta}$ that is the optimal solution to problem~\eqref{eq:alpha} under constraint~\eqref{eq:case1}.

\floatname{algorithm}{Algorithm}
\begin{algorithm}
	\caption{Primal-dual subgradient algorithm.}
	\label{alg:primal dual}
	\begin{algorithmic}[]
		\State Randomly initialize $\bm{\tilde{r}}(0)$ and $\bm{\nu}(0)$ and set $t \gets 0$
		\Repeat
			\State Compute $\bm{\tilde{r}}(t+1)$ and $\bm{\nu}(t+1)$ via~\eqref{eq:primal_var} and~\eqref{eq:dual_var}
			\State $t\gets t+1$
		\Until{convergence or maximum iteration numbers}
	\end{algorithmic}
\end{algorithm}

\section{Case 2: Sensor Activation Constraints}\label{sec:activation}
\begin{figure*}[!t]
	\centering
	\includegraphics[width=\linewidth]{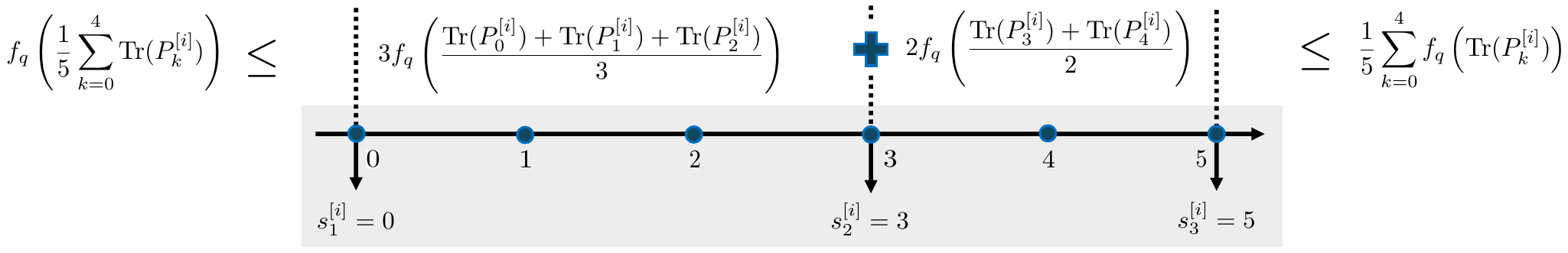}
	\caption{An illustrative example of the relaxation~\eqref{eq:relax} for arbitrary $i$-sensor with $T=4$, $s_1^{[i]}=0$, $s_2^{[i]}=3$ and $s_3^{[i]}=5$.}
	\label{fig:relax}
\end{figure*}
In this section, {we slightly relax~\eqref{eq:alpha} under the constraint~\eqref{eq:case2}} and propose two suboptimal algorithms based on MDP and greedy heuristics to solve it. Additionally, we evaluate their performance gaps compared to the optimal strategy.

The optimal solution yields a finite objective value since any communication rate $r_i>0$ leads to a finite $J_i({\bm \zeta}_i)$. Due to the continuity of $f_q(\cdot)$, we have
\begin{equation*}
\begin{aligned}
\sum_{i=1}^N f_q(J_i({\bm \zeta}_i))	&=\sum_{i=1}^N f_q\left(\limsup_{T\to\infty}\frac{1}{T+1}\sum_{k=0}^T{\rm Tr}(P_k^{[i]})\right)\\
							&=\limsup_{T\to\infty}\sum_{i=1}^N f_q\left(\frac{1}{T+1}\sum_{k=0}^T{\rm Tr}(P_k^{[i]})\right).
\end{aligned}
\end{equation*}

Define the time instants when the $i$-th sensor is scheduled to transmit:
\begin{equation*}
\begin{aligned}
s_1^{[i]}	&\triangleq\min\{k:k\in\mathbb{N},\zeta_k^{[i]}=1\},\\
s_j^{[i]}	&\triangleq\min\{k:k>s_{j-1}^{[i]},\zeta_k^{[i]}=1\},~j=2,3,\dots
\end{aligned}
\end{equation*}
Denote $\sigma_i(T)$ as the total number of transmissions by $i$-th sensor within time horizon $T$. Since $f_q(\cdot)$ is convex, we have
\begin{equation}\label{eq:relax}
f_q\left(\frac{1}{T+1}\sum_{k=0}^T{\rm Tr}(P_k^{[i]})\right)	\leq \frac{1}{T+1}\sum_{j=1}^{\sigma_i(T)}d_{j}^{[i]}f_q\left(\varrho^{[i]}_j\right),
\end{equation}
where
\begin{equation*}
\begin{aligned}
d_{j}^{[i]}&=
\begin{cases}
s_{j+1}^{[i]}-s_j^{[i]},	&\text{if $j\leq \sigma_i(T)-1$};\\
T-s_j^{[i]},			&\text{otherwise},
\end{cases}\\
\varrho^{[i]}_j&=\frac{1}{d_{j}^{[i]}}\sum_{k=0}^{d_{j}^{[i]}-1}{\rm Tr}\left(P_{k+s_j^{[i]}}^{[i]}\right).
\end{aligned}
\end{equation*}

Therefore, the original problem can be relaxed as
\begin{equation}\label{eq:relaxed}
\begin{aligned}
\min_{\bm \zeta} 	&\quad\limsup_{T\to\infty} \frac{1}{T+1}\sum_{i=1}^N\sum_{j=1}^{\sigma_i(T)}d_{j}^{[i]}f_q\left(\varrho^{[i]}_j\right)\\
{\rm s.t.}							&\quad \sum_{i=1}^N\zeta_k^{[i]}\leq Z,\quad \zeta_k^{[i]}\in\{0,1\}, \quad i=1,\dots,N.
\end{aligned}
\end{equation}
\begin{remark}{\rm
Using the convexity of $f_q(\cdot)$ yields the inequality $f_q\left(\frac{1}{T+1}\sum_{k=0}^T{\rm Tr}(P_k^{[i]})\right)\leq\frac{1}{T+1}\sum_{k=0}^Tf_q\left({\rm Tr}(P_k^{[i]})\right)$. Nevertheless, as shown in {Fig.~\ref{fig:q}}, fairness is promoted by the convex nature of $f_q(\cdot)$. A direct application of this inequality relaxes the problem into a linear form, thus compromising fairness. To address this issue, we segment the time sequence into subgroups at points where $\tau_k^{[i]}=0$ and apply the convexity inequality across these subgroups. An illustrative example of the relaxation is provided in Fig.~\ref{fig:relax}.
}\end{remark}
\subsection{MDP Formulation}
We solve the relaxed problem~\eqref{eq:relaxed} with an infinite-horizon average-cost MDP model. The formulation requires the following definitions.
\setlist[description]{font=\normalfont\itshape\textbullet\space}
\begin{description}
\item[State:] Denote $\phi_k$ as the state at time $k$:
\begin{equation*}
\phi_k\triangleq(\tau_k^{[1]},\dots,\tau_k^{[N]}),\quad\tau_k^{[i]}\in\mathbb{N}.
\end{equation*}
Therefore, the state space $\mathbb{S}$ is countably infinite.
\item[Action:] The action $a_k$ is the scheduling {decision}, i.e.,
\begin{equation*}
a_k\triangleq(\zeta_k^{[1]},\dots,\zeta_k^{[N]}).
\end{equation*}
The action space is
\begin{equation*}
\mathbb{A}=\left\{(\zeta^{[1]},\dots,\zeta^{[N]})|\sum_{i=1}^N\zeta^{[i]}\leq Z,~\zeta^{[i]}\in\{0,1\}\right\}.
\end{equation*}
\item[Transition Probability:] The transition probability from $\phi_{k}$ to $\phi_{k+1}$ given action $a_k$ is
\begin{equation*}
\mathbb{P}(\phi_{k+1}|\phi_k,a_k)=
\begin{cases}
1,	&\text{if $\zeta_k^{[i]}=1$, $\tau_{k+1}^{[i]}=0$};\\
		&\text{or $\zeta_k^{[i]}=0$, $\tau_{k+1}^{[i]}=\tau_{k}^{[i]}+1$};\\
0,	&\text{otherwise}.
\end{cases}
\end{equation*}
\item[One-Stage Cost Function:] The cost function is aligned with the objective of~\eqref{eq:relaxed}, {capturing the incremental cost at each time step}. It is independent of the action $a_k$ and is defined as $c(\phi_k)\triangleq\sum_{i=1}^Nc_i(\tau_k^{[i]})$, where
\begin{equation*}
\begin{aligned}
c_i(\tau_k^{[i]})\triangleq&(\tau_k^{[i]}+1)f_q\left(\frac{\sum_{j=0}^{\tau_k^{[i]}}{\rm Tr}[h^{(j)}(\bar{P}_i)]}{\tau_k^{[i]}+1}\right)\\
				&\quad-\tau_k^{[i]}f_q\left(\frac{\sum_{j=0}^{\tau_k^{[i]}-1}{\rm Tr}[h^{(j)}(\bar{P}_i)]}{\tau_k^{[i]}}\right).
\end{aligned}
\end{equation*}
\end{description}

A policy $\pi\in\Pi$ is a function that maps a state $\phi_k$ to an action $a_k$, where $\Pi$ is the set of admissible decisions. The average cost induced by policy $\pi$ is then defined as
\begin{equation*}
c_{\pi}(\phi_0)=\limsup_{T\to\infty}\frac{1}{T+1}\mathbb{E}_{\phi_0}^{\pi}\left[\sum_{k=0}^{T}c(\phi_k)\right],
\end{equation*}
where $\phi_0$ is the initial state, and the expectation is taken with respect to the policy $\pi$. We aim to find an optimal policy, i.e., $\pi^{\star}\in\Pi$ such that the average cost is minimized
\begin{equation*}
c_{\pi^{\star}}(\phi_0)\leq c_{\pi}(\phi_0),\quad \forall\phi_0\in\mathbb{S},\quad\forall\pi\in\Pi.
\end{equation*}
Specifically, we are interested in finding an optimal deterministic stationary policy. A policy $\pi$ is said to be deterministic if for each state $\phi_k$, it specifies a single action $a_k$ to take. It is said to be {stationary} if the policy is time-invariant and only depends on the current state $\phi_k$.

\begin{thm}\label{thm:bellman}{\rm
The optimal cost $c_{\pi^{\star}}$ satisfies the average cost Bellman equation, i.e.,
\begin{equation}\label{eq:bellman}
c_{\pi^{\star}}+V(\phi)=\min_{a\in\mathbb{A}}\{c(\phi)+\sum_{\phi'\in\mathbb{S}}\mathbb{P}(\phi'|\phi,a)V(\phi')\},
\end{equation}
where {$V(\cdot):\mathbb{S}\to\mathbb{R}$ is the relative value function}. Additionally, there exists a deterministic stationary policy $\pi^{\star}$, a constant $c_{\pi^{\star}}$ and a continuous function $V(\cdot)$ which satisfies the Bellman equation~\eqref{eq:bellman}. The optimal policy $a^{\star}(\phi)=\pi^{\star}(\phi)$ is determined by
\begin{equation*}
a^{\star}(\phi)=\arg\min_{a\in\mathbb{A}}\{c(\phi)+\sum_{\phi'\in\mathbb{S}}\mathbb{P}(\phi'|\phi,a)V(\phi')\}.
\end{equation*}
}\end{thm}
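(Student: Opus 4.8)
The plan is to establish the existence of a solution to the average cost Bellman equation~\eqref{eq:bellman} by verifying the standard sufficient conditions for average-cost MDPs with countably infinite state space, and then derive optimality of the induced deterministic stationary policy. I would invoke the well-known \emph{vanishing discount} approach: for each discount factor $\lambda\in(0,1)$, consider the discounted-cost problem with value function $V_\lambda(\phi)$ satisfying the discounted Bellman equation $V_\lambda(\phi)=\min_{a\in\mathbb{A}}\{c(\phi)+\lambda\sum_{\phi'}\mathbb{P}(\phi'|\phi,a)V_\lambda(\phi')\}$. Since the action space $\mathbb{A}$ is finite and the one-stage cost $c(\phi)$ is nonnegative, the discounted value function is well-defined. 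Defining the relative value function $V(\phi)\triangleq\lim_{\lambda\to1}[V_\lambda(\phi)-V_\lambda(\phi_{\rm ref})]$ for a fixed reference state $\phi_{\rm ref}$, the goal is to show this limit exists and that passing $\lambda\to1$ in the discounted equation yields~\eqref{eq:bellman} with $c_{\pi^\star}=\lim_{\lambda\to1}(1-\lambda)V_\lambda(\phi_{\rm ref})$.

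The key technical steps, carried out in order, would be: first, verify that the one-stage cost $c_i(\tau_k^{[i]})$ grows at a controlled rate in $\tau_k^{[i]}$ --- here I would use that ${\rm Tr}[h_i^{(j)}(\bar P_i)]$ grows at most polynomially when $\rho(A_i)\le 1$ or geometrically (like $\rho(A_i)^{2j}$) when $\rho(A_i)>1$, so $c_i$ and hence $c(\phi)$ have a definite growth profile. Second, I would establish the standard Sennott-type conditions (see the average-cost MDP literature, e.g.\ Sennott or Hern\'andez-Lerma): (SEN1) the existence of a stationary policy with finite average cost --- which holds because Lemma~1's periodic threshold policy attains a finite $J_i$ for any admissible rate, giving a stable, recurrent chain; (SEN2) a uniform lower bound showing $V_\lambda(\phi)-V_\lambda(\phi_{\rm ref})\ge -M$ for some constant $M$ independent of $\lambda$; and (SEN3) the existence of a nonnegative function $U(\phi)$ with $V_\lambda(\phi)-V_\lambda(\phi_{\rm ref})\le U(\phi)$ and $\sum_{\phi'}\mathbb{P}(\phi'|\phi,a)U(\phi')<\infty$ for every $(\phi,a)$. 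Third, under these conditions the relative value function limit exists (along a subsequence) and is continuous (trivially, since $\mathbb{S}$ is countable), the minimizing action exists because $\mathbb{A}$ is finite, and the limiting equation is exactly~\eqref{eq:bellman}; optimality of the argmin policy then follows from the standard verification theorem that any policy satisfying the Bellman equation is average-cost optimal.

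The main obstacle I anticipate is verifying condition (SEN3), namely constructing the dominating function $U(\phi)$ and bounding the relative discounted values uniformly in $\lambda$. The difficulty is sharpest for unstable systems with $\rho(A_i)>1$, where $c_i(\tau)$ grows geometrically in $\tau$; one must show that the discounted value difference $V_\lambda(\phi)-V_\lambda(\phi_{\rm ref})$ does not blow up as $\lambda\to1$. The natural strategy is to bound $V_\lambda(\phi)-V_\lambda(\phi_{\rm ref})$ by the expected discounted cost incurred by a reference policy that drives the chain from $\phi$ back to $\phi_{\rm ref}$ (or to an absorbing regeneration event where all $\tau^{[i]}$ are reset), using that the threshold policy of Lemma~1 forces each $\tau^{[i]}$ to be bounded by $\eta_i+1$, hence the reachable recurrent portion of the state space is effectively finite and the return time has a uniformly bounded cost. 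Controlling this coupling/regeneration time cost \emph{uniformly} over the unbounded transient states is the crux; once $U(\phi)$ is exhibited with finite one-step expectation, the remaining steps are routine applications of the cited average-cost existence theorems.
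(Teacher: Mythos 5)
Your proposal follows essentially the same route as the paper's proof: the paper verifies the conditions of \cite[Theorem 5.5.4]{hernandez2012discrete}, which is precisely the vanishing-discount machinery you describe, with your (SEN1)--(SEN3) mapping onto the paper's conditions (III-a), (III-b) and (IV) — the finite-average-cost policy comes from a round-robin schedule, the lower bound from Sch\"al's lemma, and the dominating function $U(\phi)$ from the discounted cost of a policy that drives any state to a fixed reference state $\psi$ in at most $\lfloor N/Z\rfloor+1$ steps. Your anticipated obstacle is resolved exactly as you suggest: the bound need only be state-dependent (not uniform), and the deterministic, finitely-supported transitions make its one-step expectation finite.
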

\begin{proof}{
The proof involves verifying the five conditions outlined in~\cite[Theorem 5.5.4]{hernandez2012discrete}. Interested readers may refer to Appendix~\ref{apx:bellman} for the complete proof.
}\end{proof}
The Bellman equation~\eqref{eq:bellman} can be solved by the relative value iteration algorithm~\cite{bertsekas2011dynamic}. With slight abuse of notations, let $V_t(\phi)$ represent the value of $\phi$ at the $t$-th iteration of the algorithm. Selecting an arbitrary but fixed state $\phi_f\in\mathbb{S}$, we obtain the following update rule for relative value iteration:
\begin{equation}\label{eq:value iteration}
\begin{aligned}
V_t(\phi)	&=\min_{a\in\mathbb{A}}\{c(\phi)+\sum_{\phi'\in\mathbb{S}}\mathbb{P}(\phi'|\phi,a)V_{t-1}(\phi')\}\\
		&\quad-\min_{a\in\mathbb{A}}\{c(\phi_f)+\sum_{\phi'\in\mathbb{S}}\mathbb{P}(\phi'|\phi_f,a)V_{t-1}(\phi')\}.
\end{aligned}
\end{equation}
\begin{lemma}\label{lemma:value}{\rm
The update rule~\eqref{eq:value iteration} converge to a solution to~\eqref{eq:bellman}, i.e., $\lim_{t\to\infty}V_t(\phi)\to V(\phi)$.
}\end{lemma}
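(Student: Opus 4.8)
The plan is to exploit the constant-shift invariance of the dynamic-programming operator to reduce relative value iteration to ordinary value iteration, after which the existence result of Theorem~\ref{thm:bellman} supplies the limit. Define the operator $(\mathcal{T}V)(\phi)\triangleq\min_{a\in\mathbb{A}}\{c(\phi)+\sum_{\phi'\in\mathbb{S}}\mathbb{P}(\phi'|\phi,a)V(\phi')\}$, and write $\mathcal{T}^{(t)}$ for its $t$-fold composition in the paper's sense. Since $\mathbb{P}(\cdot|\phi,a)$ is a probability measure, $\mathcal{T}$ obeys $\mathcal{T}(V+\lambda\bm{1})=\mathcal{T}V+\lambda\bm{1}$ for every scalar $\lambda$. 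First I would use this to prove, by induction on $t\geq 1$, that the iterates of~\eqref{eq:value iteration} are exactly the recentred ordinary iterates, $V_t=\mathcal{T}^{(t)}V_0-(\mathcal{T}^{(t)}V_0)(\phi_f)\bm{1}$. Consequently the iterates differ from the un-normalised sequence $U_t\triangleq\mathcal{T}^{(t)}V_0$ only by an additive constant, and the claim reduces to showing that $U_t(\phi)-U_t(\phi_f)$ converges to $V(\phi)-V(\phi_f)$ for the relative value function $V$ of Theorem~\ref{thm:bellman}.

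Next I would identify the target fixed point. Rewriting the Bellman equation~\eqref{eq:bellman} as $\mathcal{T}V=V+c_{\pi^{\star}}\bm{1}$ gives $\mathcal{T}^{(t)}V=V+t\,c_{\pi^{\star}}\bm{1}$, so $V-V(\phi_f)\bm{1}$ is a fixed point of~\eqref{eq:value iteration} and the problem becomes showing that $U_t-(V+t\,c_{\pi^{\star}}\bm{1})$ collapses to a constant as $t\to\infty$, i.e.\ that its span seminorm vanishes. The two ingredients I would invoke are that $\mathcal{T}$ is monotone and nonexpansive in the span seminorm (immediate from the minimum over $\mathbb{A}$ and the averaging against $\mathbb{P}$), and that the one-stage cost $c(\phi)$ together with the transition structure satisfies the drift and boundedness conditions already verified when checking the hypotheses of~\cite[Theorem 5.5.4]{hernandez2012discrete} in the proof of Theorem~\ref{thm:bellman}. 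These conditions furnish a Lyapunov-type bound on $V$ and control the growth of $U_t$ over the countably infinite state space, confining the iterates to a region on which a weighted span contraction can be applied, as in the relative value iteration theory of~\cite{bertsekas2011dynamic}.

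Finally I would combine span-nonexpansiveness with the verified reachability of the distinguished state $\phi_f$ to pass from nonexpansiveness to asymptotic contraction, concluding that the span of $U_t-V-t\,c_{\pi^{\star}}\bm{1}$ tends to zero; subtracting the value at $\phi_f$ then yields $V_t\to V-V(\phi_f)\bm{1}$, which solves~\eqref{eq:bellman}. The hard part will be the countably infinite state space: in contrast to the finite case, $\mathcal{T}$ is only nonexpansive---not strictly contractive---in the span seminorm, so existence of a Bellman solution alone does not force the iterates to converge, and the deterministic transitions make periodicity of the controlled chain a genuine threat of oscillation. The delicate step is therefore to confirm that the structural conditions checked in Theorem~\ref{thm:bellman}, namely the weighted bound on $V$ together with aperiodic reachability of $\phi_f$, are exactly those required to rule out oscillation and to force the span of the error to decay rather than merely remain bounded.
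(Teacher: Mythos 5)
Your reduction is sound as far as it goes: the shift-invariance $\mathcal{T}(V+\lambda\bm{1})=\mathcal{T}V+\lambda\bm{1}$, the identity $V_t=\mathcal{T}^{(t)}V_0-(\mathcal{T}^{(t)}V_0)(\phi_f)\bm{1}$, the identification of $V-V(\phi_f)\bm{1}$ as a fixed point of~\eqref{eq:value iteration}, and span-seminorm nonexpansiveness of $\mathcal{T}$ are all correct and are the standard opening moves. (Converging to $V-V(\phi_f)\bm{1}$ rather than $V$ is harmless, since the relative value function is only determined up to an additive constant.) The paper itself does none of this: its proof is a one-line citation of~\cite[Proposition 4.3.2]{bertsekas2011dynamic}, so you are attempting to reconstruct the argument that the paper outsources.

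The gap is that your decisive step is announced but never carried out, and the ingredient you propose to close it with is not actually available. You write that the conditions verified in the proof of Theorem~\ref{thm:bellman} supply ``aperiodic reachability of $\phi_f$'' and thereby upgrade nonexpansiveness to asymptotic span contraction. But what is verified there are the vanishing-discount conditions (I)--(V) of~\cite[Theorem 5.5.4]{hernandez2012discrete}: they give existence of a solution $(c_{\pi^{\star}},V)$ to~\eqref{eq:bellman} and finite-step \emph{reachability} of a distinguished state $\psi$, not aperiodicity of the chain induced by any optimal policy. In this model the transitions are deterministic given the action and Theorem~\ref{thm:period} shows the optimal closed-loop trajectory is eventually \emph{periodic}, which is precisely the regime in which undamped relative value iteration is known to oscillate rather than converge; existence of a Bellman solution plus span nonexpansiveness does not rule this out. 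To make your argument complete you would need either to verify an explicit $m$-step span-contraction (equivalently, a uniform minorization/aperiodicity condition on the controlled chain), or to apply the aperiodicity transformation (replace $\mathbb{P}$ by $(1-\tau)I+\tau\mathbb{P}$, i.e., a damped iteration $V_{t}\leftarrow(1-\tau)V_{t-1}+\tau\,\mathcal{T}V_{t-1}-(\cdot)(\phi_f)\bm{1}$), or to invoke a countable-state relative value iteration theorem whose hypotheses you check directly. As written, the proof stops exactly where the difficulty begins.
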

\begin{proof}
The proof follows directly from~\cite[Proposition 4.3.2]{bertsekas2011dynamic}, thus omitted here.
\end{proof}

The following theorem demonstrates the existence of an optimal policy that converges to be asymptotically periodic~\cite{orihuela2014periodicity}. It provides a necessary condition for optimality and can be utilized to refine non-periodic suboptimal policies by interchanging sensor transmissions~\cite[Theorem 3, Example 1]{han2017optimal}. 
\begin{thm}[Asymptotic periodicity]\label{thm:period}{\rm There exists an optimal policy $\pi^{\star}$ converging to be periodic, i.e., under $\pi^{\star}$, $\exists L,M\in\mathbb{N}_+$ such that $a^{\star}_k=a^{\star}_{k+L}$ for all $k\geq M$.
}\end{thm}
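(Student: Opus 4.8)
The plan is to exploit the determinism hidden inside this MDP. By Theorem~\ref{thm:bellman} there exists a deterministic stationary optimal policy $\pi^{\star}$, and the transition kernel $\mathbb{P}(\phi'|\phi,a)$ is itself deterministic: a given pair $(\phi,a)$ forces a unique successor. Consequently, fixing any initial state $\phi_0$, the closed-loop evolution $\phi_{k+1}=F(\phi_k)$ under $\pi^{\star}$ is a single deterministic orbit in $\mathbb{S}$, with action stream $a_k^{\star}=\pi^{\star}(\phi_k)$. Periodicity of $\{a_k^{\star}\}$ will therefore follow once I show that this orbit eventually lands in, and recurs within, a \emph{finite} effective state set: a deterministic orbit through finitely many states must, by the pigeonhole principle, revisit a state, after which the transition map and the policy being deterministic force the orbit, hence the actions, to repeat with a fixed period.

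The core of the argument is thus to replace the countably infinite $\mathbb{S}$ by a finite set without altering the optimal decisions. I would cap every age at a level $\bar\tau$, identifying all ages exceeding $\bar\tau$ with $\bar\tau$. Two regimes justify this. For an unstable system ($\rho(A_i)\ge 1$), ${\rm Tr}[h_i^{(j)}(\bar P_i)]$ grows without bound, so the marginal cost $c_i(\tau)\to\infty$; were the inter-transmission intervals of such a sensor unbounded along the optimal orbit, the accumulated cost over an interval would grow super-linearly in its length and the average cost would diverge, contradicting the finiteness of $c_{\pi^{\star}}$. Hence ages of unstable sensors are uniformly bounded under $\pi^{\star}$. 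For a stable system ($\rho(A_i)<1$), ${\rm Tr}[h_i^{(j)}(\bar P_i)]$ converges, so $c_i(\tau)$ remains bounded and tends to a finite positive limit as $\tau\to\infty$; beyond some $\bar\tau_i$ the one-stage cost, and with it the relative value function $V$ in the $i$-th coordinate, is flat up to arbitrarily small error, so the Bellman minimizer no longer distinguishes ages above $\bar\tau_i$. Capping at $\bar\tau=\max_i\bar\tau_i$ then leaves both $\pi^{\star}$ on the relevant states and $c_{\pi^{\star}}$ intact, confining the orbit to a finite state space.

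To finish, let $\phi_M=\phi_{M+L}$ be the first recurrence in the capped space, with $M\ge 0$ and $L\in\mathbb{N}_+$. Since $F$ and $\pi^{\star}$ depend only on the capped state, an immediate induction gives $\phi_{k+L}=\phi_k$, whence $a_{k+L}^{\star}=\pi^{\star}(\phi_{k+L})=\pi^{\star}(\phi_k)=a_k^{\star}$ for all $k\ge M$. Sensors whose true age has saturated are precisely those never re-transmitted after $M$, whose action is the constant $0$ and is trivially consistent with period $L$. This produces the asserted $L$ and $M$, and the interchange construction of~\cite[Theorem 3, Example 1]{han2017optimal} can, if needed, be invoked to select such a saturation-respecting policy among the optimal ones.

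I expect the main obstacle to be the stable case of this finite reduction. Unlike the unstable case, boundedness of the ages is \emph{not} enforced by the cost, so one must instead show that the relative value function saturates in each large age (using monotonicity of $V$ in $\tau^{[i]}$ together with the convergence of $c_i$) and that, among the possibly many optimal policies, one exists whose decisions ignore ages above the cap. Establishing this saturation rigorously and ruling out optimal policies that serve a stable sensor at irregular, unboundedly spaced times is the delicate part; once the finite effective state space is secured, the remainder is the deterministic pigeonhole argument.
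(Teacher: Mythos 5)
Your overall strategy coincides with the paper's: take the deterministic stationary optimal policy from Theorem~\ref{thm:bellman}, observe that the transitions are deterministic so the closed loop is a single orbit, confine that orbit to a finite effective state set, and then let the pigeonhole argument force a recurrent state and hence a periodic action stream. The final pigeonhole step and the observation that $\tau_k^{[i]}$ resets to zero upon transmission are exactly the ingredients the paper uses. Where you diverge is in \emph{how} the finite-state reduction is justified, and that is precisely where your proposal is incomplete. The paper partitions sensors into ``quiet'' ones (eventually never scheduled, so their action is constantly $0$ and they can be removed from the problem) and ``transmitting'' ones (scheduled infinitely often, so their age resets recurrently), and argues finiteness of the reduced chain from there. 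You instead partition by stability of $A_i$ and propose capping the ages.

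The gap is your stable case. Your unstable case is fine --- divergence of ${\rm Tr}[h_i^{(j)}(\bar P_i)]$ and finiteness of $c_{\pi^\star}$ do bound the inter-transmission intervals of unstable sensors. But for a stable sensor you only \emph{assert} that $V$ saturates in the corresponding age coordinate and that some optimal policy ignores ages above a cap; you explicitly concede you cannot yet rule out an optimal policy that serves a stable sensor at unboundedly spaced times. Without that, the orbit need not enter a finite set and the pigeonhole step never fires, so the theorem is not proved. Note that the problematic scenario --- a stable sensor whose service times thin out indefinitely --- is exactly what the paper's quiet/transmitting dichotomy is designed to absorb: such a sensor is either eventually never served (quiet, hence trivially periodic and removable) or served infinitely often. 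If you want to keep your capping construction, you must actually carry out the saturation argument (monotonicity of $V$ in $\tau^{[i]}$ plus convergence of $c_i(\tau)$, and a selection argument among optimal policies), which is a nontrivial piece of work that your proposal currently only names as ``the delicate part.''
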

\begin{proof}
{
The proof leverages the deterministic nature of $\pi^\star$ and the fact that $\tau_{k}^{[i]}$ resets to zero when $\zeta_k^{[i]}=1$. Interested readers may refer to Appendix~\ref{apx:period} for the complete proof.
}\end{proof}

Denote $\tilde{c}_{\#}$ as the optimal value of the original activation-constrained problem, i.e., problem~\eqref{eq:alpha} with constraint~\eqref{eq:case2}. The following lemma outlines the performance gap between the MDP solution to~\eqref{eq:relaxed} and the optimal solution to~\eqref{eq:alpha} .

\begin{lemma}\label{lemma:performance gap 1}{\rm
We have $0\leq c_{\pi^{\star}}-\tilde{c}_{\#}\leq c_{\pi^{\star}}-g(\tilde{\bm r}^{\star})$, where $g(\tilde{\bm r}^{\star})$ denotes the optimal value of~\eqref{eq:opt} with $R=Z$.
}\end{lemma}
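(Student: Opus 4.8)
The plan is to prove the equivalent two-sided bound $g(\tilde{\bm r}^{\star})\le\tilde c_\#\le c_{\pi^{\star}}$. The left inequality gives $c_{\pi^{\star}}-\tilde c_\#\le c_{\pi^{\star}}-g(\tilde{\bm r}^{\star})$ and the right one gives $0\le c_{\pi^{\star}}-\tilde c_\#$, which is exactly the claim. Write $F(\bm\zeta)=\sum_{i=1}^N f_q(J_i(\bm\zeta_i))$ for the true objective of~\eqref{eq:alpha} and $\hat F(\bm\zeta)$ for the relaxed objective in~\eqref{eq:relaxed}; then $\tilde c_\#=\min F(\bm\zeta)$ and $c_{\pi^{\star}}=\min\hat F(\bm\zeta)$, both over schedules obeying the activation constraint~\eqref{eq:case2}.

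\textbf{Upper bound} $\tilde c_\#\le c_{\pi^{\star}}$. I would use that the relaxation is one-sided: summing~\eqref{eq:relax} over $i$ and taking the $\limsup$ yields $F(\bm\zeta)\le\hat F(\bm\zeta)$ for every feasible $\bm\zeta$. Let $\bm\zeta^{\diamond}$ attain the relaxed optimum, so $c_{\pi^{\star}}=\hat F(\bm\zeta^{\diamond})$. Since $\bm\zeta^{\diamond}$ also satisfies~\eqref{eq:case2}, it is feasible for the original problem, whence $\tilde c_\#\le F(\bm\zeta^{\diamond})\le\hat F(\bm\zeta^{\diamond})=c_{\pi^{\star}}$.

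\textbf{Lower bound} $g(\tilde{\bm r}^{\star})\le\tilde c_\#$. Here I would exploit that~\eqref{eq:case2} is stronger than~\eqref{eq:case1} with $R=Z$. Let $\bm\zeta^{\sharp}$ attain $\tilde c_\#$ and let $\bar r_i$ be the rate it induces on sensor $i$. Averaging $\sum_i\zeta_k^{[i]}\le Z$ over $k$ gives $\sum_i\bar r_i\le Z$ with $0\le\bar r_i\le1$, so $\bar{\bm r}$ is feasible for~\eqref{eq:opt}. Because $\bm\zeta^{\sharp}$ restricted to sensor $i$ is feasible for~\eqref{eq:single} with $\tilde r_i=\bar r_i$, we have $J_i(\bm\zeta^{\sharp}_i)\ge\tilde J_i(\bar r_i)$; monotonicity of $f_q$ then gives $\tilde c_\#=\sum_i f_q(J_i(\bm\zeta^{\sharp}_i))\ge\sum_i f_q(\tilde J_i(\bar r_i))=g(\bar{\bm r})\ge g(\tilde{\bm r}^{\star})$, the last step by optimality of $\tilde{\bm r}^{\star}$ for~\eqref{eq:opt}.

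\textbf{Main obstacle.} The two arguments above are routine once the objects are set up; the step needing real care is the identity $c_{\pi^{\star}}=\min\hat F(\bm\zeta)$, i.e.\ that the average-cost MDP truly optimizes~\eqref{eq:relaxed}. This requires a telescoping check: over any inter-transmission segment of length $d_j^{[i]}$, the one-stage costs $c_i(\tau)$, $\tau=0,\dots,d_j^{[i]}-1$, collapse through cancellation of the two $f_q$ terms to exactly $d_j^{[i]}f_q(\varrho_j^{[i]})$, matching the summand of~\eqref{eq:relaxed}, after which the time-average normalizations must be reconciled. A minor technical point is the $\underline{\bm r}$ modification in~\eqref{eq:opt}: since any finite-cost schedule assigns a strictly positive rate to every unstable sensor, choosing $\epsilon$ small enough keeps the lower bound $\underline r_i$ inactive at both $\bar{\bm r}$ and $\tilde{\bm r}^{\star}$, so $\bar{\bm r}$ remains feasible and $g(\tilde{\bm r}^{\star})$ is unaffected.
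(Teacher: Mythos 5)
Your proof is correct and follows essentially the same route as the paper's: the sandwich $g(\tilde{\bm r}^{\star})\le\tilde c_\#\le c_{\pi^{\star}}$, with the lower bound obtained by averaging the activation constraint~\eqref{eq:case2} into the rate constraint~\eqref{eq:case1} with $R=Z$, and the upper bound from the one-sided nature of the relaxation. You are in fact more explicit than the paper on two points it glosses over --- the telescoping identity linking the MDP one-stage cost to the objective of~\eqref{eq:relaxed}, and the feasibility of the induced rates with respect to the lower bounds $\underline{r}_i$ --- but the underlying argument is the same.
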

\begin{proof}
Since $\tilde{c}_{\#}$ is the optimal value, obiviously $\tilde{c}_{\#}\leq c_{\pi^{\star}}$. Furthermore, when $\sum_{i=1}^N\zeta_k^{[i]}\leq Z$, the corresponding communication rate satisfies
\begin{equation*}
\begin{aligned}
&\sum_{i=1}^Nr_i	=\sum_{i=1}^N\lim_{T\to\infty}\frac{1}{T+1}\sum_{k=0}^T\mathbb{E}[\zeta_k^{[i]}]\\
			=&\lim_{T\to\infty}\frac{1}{T+1}\sum_{k=0}^T\sum_{i=1}^N\mathbb{E}[\zeta_k^{[i]}]\leq\lim_{T\to\infty}\frac{1}{T+1}\sum_{k=0}^TZ=Z.
\end{aligned}
\end{equation*}
Therefore, when $R=Z$, the sensor activation constraint \eqref{eq:case2} is a sufficient condition for the communication rate constraint \eqref{eq:case1}, and thus $g(\tilde{\bm r}^{\star})\leq\tilde{c}_{\#}$. Combining all these inequalities, we complete the proof.
\end{proof}
\begin{remark}{\rm

While $c_{\pi^\star}$ is not known in advance, the relative value iteration algorithm~\eqref{eq:value iteration} enables its determination by tracking the cost trajectory $\{c(\phi_1),\dots,c(\phi_k),\dots\}$. This makes $c_{\pi^{\star}}-g(\tilde{\bm r}^{\star})$ available during execution. Therefore, Lemma~\ref{lemma:performance gap 1} provides a real-time measure of the performance gap between the algorithm's output and the optimal value $\tilde{c}_{\#}$ during the implementation phase.
}\end{remark}

\subsection{Greedy Heuristics}
The state space grows with the number of sensors, rendering value iteration time-consuming with a large number of sensors. To address this, we introduce a suboptimal algorithm and analyze its performance gap relative to the optimal one.

The main idea of the greedy heuristic is to select the sensor that incurs the highest cost at each time $k$ if not selected. The complete algorithm is detailed in Algorithm~\ref{alg:greedy}.
\floatname{algorithm}{Algorithm}
\begin{algorithm}
	\caption{Greedy algorithm.}
	\label{alg:greedy}
	\begin{algorithmic}[]
		\State For each time $k$, initialize
		\begin{equation*}
		\mathcal{I}\gets\{1,\dots,N\}, \quad \zeta_k^{[i]}=0,~i\in\mathcal{I}
		\end{equation*}
				\For{$z$ from $1$ to $Z$} 
			\begin{align*}
			j\gets\arg\max_{j\in\mathcal{I}}c_j(\tau_k^{[j]}),\qquad \zeta_k^{[j]}\gets1,\qquad \mathcal{I}\gets\mathcal{I}\setminus j
			\end{align*}
				\EndFor
		\end{algorithmic}
\end{algorithm}
\begin{lemma}\label{lemma:period}{\rm
The schedule generated by Algorithm~\ref{alg:greedy} is periodic.
}\end{lemma}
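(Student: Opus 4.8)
The plan is to show that Algorithm~\ref{alg:greedy} implements a deterministic stationary policy and that, under it, the quantities that actually drive the scheduling decision are confined to a finite set; eventual periodicity of the action sequence, i.e.\ the existence of $L,M\in\mathbb{N}_+$ with $a_k=a_{k+L}$ for all $k\geq M$, then follows from the pigeonhole principle. First I would note that, after fixing a tie-breaking rule (say, by sensor index), the greedy step selects exactly the $Z$ sensors with the largest one-stage costs $c_i(\tau_k^{[i]})$, so the action $a_k$ is a deterministic function of the state $\phi_k=(\tau_k^{[1]},\dots,\tau_k^{[N]})$. The state transition is also deterministic, since each $\tau^{[i]}$ resets to $0$ when selected and increments otherwise. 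Hence the trajectory is generated by iterating a single deterministic map $\phi_{k+1}=T(\phi_k)$, and the schedule is eventually periodic the moment some state is revisited.

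Next I would control the growth of $\tau_k^{[i]}$ using two monotonicity facts. (i)~${\rm Tr}[h_i^{(j)}(\bar P_i)]$ is non-decreasing in $j$, because $\bar P_i=\tilde g_i(h_i(\bar P_i))\preceq h_i(\bar P_i)$ and $h_i$ is order-preserving. (ii)~Combining this with the convexity and monotonicity of $f_q$, the one-stage cost $c_i(\tau)$ is non-decreasing in $\tau$; it diverges as $\tau\to\infty$ exactly when $\rho(A_i)\geq1$, and converges to the finite value $f_q({\rm Tr}(\bar P_i^{\infty}))$, with $\bar P_i^{\infty}$ solving $X=A_iXA_i^T+Q_i$, when $\rho(A_i)<1$. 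From these I would extract a dichotomy: each sensor is selected either infinitely often or only finitely often. A sensor selected only finitely often must be stable, since otherwise its diverging cost would eventually exceed all competitors and force a transmission; such a sensor emits the constant action $\zeta_k^{[i]}=0$ after a finite time, with cost converged arbitrarily close to $f_q({\rm Tr}(\bar P_i^{\infty}))$. For each sensor selected infinitely often I would argue that its inter-transmission times are uniformly bounded: if $\tau_k^{[i]}$ grew without bound, the non-decreasing cost $c_i(\tau_k^{[i]})$ would dominate the bounded or just-reset costs of all but fewer than $Z$ competitors, placing $i$ in the top-$Z$ set and resetting it.

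Finally I would assemble these facts. After a finite transient, the finitely-selected sensors emit the constant action $0$ and carry costs so close to their limits that their ranking relative to the remaining sensors is frozen, while the infinitely-selected sensors live in a bounded region $\{\tau^{[i]}\leq\bar\tau\}$; hence the \emph{effective} state takes only finitely many values. Determinism of $T$ and the pigeonhole principle then force the effective state to repeat, after which the entire action sequence repeats, yielding the desired $L$ and $M$.

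The main obstacle is the uniform boundedness of the inter-transmission times of the infinitely-selected sensors. Because the sensors have heterogeneous cost functions, the comparison driving the greedy choice is across distinct, time-varying quantities; the delicate case is when more than $Z$ unstable sensors compete for $Z$ slots, where one must rule out a runaway in which the not-yet-served sensors' costs escalate in lockstep. I expect this to need a counting/potential argument showing that at most $Z$ sensors can simultaneously carry very large $\tau$ and that the highest-cost sensor is always served, which caps the maximal delay; the converging-but-not-constant costs of the stable, eventually-abandoned sensors are a secondary complication, resolved by observing that their relative ranking stabilizes once they lie within a small enough tolerance of their limits.
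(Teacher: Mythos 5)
Your proposal follows essentially the same route as the paper's proof, which is given only by reference to the proof of Theorem~\ref{thm:period}: split the sensors into those selected finitely often and those selected infinitely often, observe that the greedy map is a deterministic function of the state $(\tau_k^{[1]},\dots,\tau_k^{[N]})$, argue the effective state lives in a finite set, and invoke the pigeonhole principle. In fact you are more careful than the paper on the one load-bearing step: the paper asserts that ``$\tau_k^{[i]}=0$ every time $\zeta_k^{[i]}=1$'' implies a finite state space, which is only true if the inter-transmission gaps of the recurrently selected sensors are uniformly bounded --- exactly the obstacle you flag. The counting argument you anticipate does close this gap for the greedy rule: if an unstable sensor $i$ were starved for a long stretch, then at every step of that stretch some $Z$ competitors would have to carry costs at least $c_i(\tau_k^{[i]})\to\infty$, hence carry large $\tau$; but each such competitor is reset upon selection and needs at least $T$ further steps to regain a cost of that magnitude, so sustaining the starvation for more than roughly $N/Z$ multiples of $T$ exhausts the pool of $N-1$ competitors, a contradiction that also yields a uniform bound on the gaps. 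Your remaining concern about the slowly converging costs of the eventually-abandoned stable sensors is handled as you suggest (their ranking against the finitely many cost values attained by the recurrent sensors freezes after finite time, with ties broken by the fixed rule), so the argument is complete.
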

\begin{proof}
The proof is similar to Therorem~\ref{thm:period}, thus omitted.
\end{proof}

Lemma~\ref{lemma:period} establishes that Algorithm~\ref{alg:greedy} yields a periodic schedule, a highly advantageous property that allows for easy implementation using a simple look-up table.

Denote $\tilde{c}_g$ as the cost obtained by Algorithm \ref{alg:greedy}. The following lemma describes the performance gap between Algorithm~\ref{alg:greedy} and the optimal schedule.

\begin{lemma}\label{lemma:performance gap 2}{\rm
We have $0\leq\tilde{c}_g-\tilde{c}_{\#}\leq \tilde{c}_g-g(\tilde{\bm r}^{\star})$, where $g(\tilde{\bm r}^{\star})$ denotes the optimal value of~\eqref{eq:opt} with $R=Z$.
}\end{lemma}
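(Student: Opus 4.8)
The plan is to notice that the claimed two–sided bound is algebraically equivalent to the sandwich $g(\tilde{\bm r}^{\star})\leq\tilde{c}_{\#}\leq\tilde{c}_g$, and to establish the two outer inequalities separately. The right-hand inequality $g(\tilde{\bm r}^{\star})\leq\tilde{c}_{\#}$ is precisely the lower-bound half already proved for the MDP solution in Lemma~\ref{lemma:performance gap 1}, so I would reuse that argument verbatim: fixing $R=Z$, any schedule meeting the per-slot activation constraint~\eqref{eq:case2} automatically satisfies the long-run rate constraint~\eqref{eq:case1} via the time-averaging computation in that proof, so the activation-feasible set is contained in the rate-feasible set of~\eqref{eq:opt}; minimizing the same objective over a smaller set can only raise the optimum, whence $g(\tilde{\bm r}^{\star})\leq\tilde{c}_{\#}$.

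The remaining inequality $\tilde{c}_{\#}\leq\tilde{c}_g$ is the statement that the greedy heuristic is suboptimal for the original activation-constrained problem. First I would observe that the inner loop of Algorithm~\ref{alg:greedy} activates at most $Z$ sensors at every time $k$, so the schedule it returns is feasible for~\eqref{eq:alpha} under~\eqref{eq:case2}. Since $\tilde{c}_{\#}$ is by definition the minimum of the original objective over all such feasible schedules, the original objective evaluated on the greedy schedule is at least $\tilde{c}_{\#}$. If $\tilde{c}_g$ is taken to be that original objective value, the inequality is immediate; if instead $\tilde{c}_g$ denotes the relaxed objective of~\eqref{eq:relaxed} evaluated on the greedy schedule, I would insert the relaxation inequality~\eqref{eq:relax}, which gives (original objective) $\leq$ (relaxed objective) $=\tilde{c}_g$, so that $\tilde{c}_{\#}\leq$ (original objective on the greedy schedule) $\leq\tilde{c}_g$ still holds. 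Chaining the two inequalities yields the claim.

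I do not expect a genuine obstacle here, since the result is a direct re-run of Lemma~\ref{lemma:performance gap 1} with $c_{\pi^{\star}}$ replaced by the greedy cost $\tilde{c}_g$. The only step requiring care is the bookkeeping of which objective---the original one in~\eqref{eq:alpha} or the relaxed one in~\eqref{eq:relaxed}---the symbol $\tilde{c}_g$ measures, because the relaxation inequality~\eqref{eq:relax} must be oriented so as to keep $\tilde{c}_g$ on the upper side of the chain. The feasibility of the greedy schedule, which underwrites $\tilde{c}_{\#}\leq\tilde{c}_g$, is guaranteed structurally by the loop bound $z$ from $1$ to $Z$ in Algorithm~\ref{alg:greedy}, so no separate verification is needed.
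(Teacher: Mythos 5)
Your proposal is correct and follows essentially the same route as the paper, which simply notes that the argument of Lemma~\ref{lemma:performance gap 1} carries over with $c_{\pi^{\star}}$ replaced by $\tilde{c}_g$: the lower bound $g(\tilde{\bm r}^{\star})\leq\tilde{c}_{\#}$ comes from the activation constraint implying the rate constraint when $R=Z$, and the upper bound $\tilde{c}_{\#}\leq\tilde{c}_g$ from feasibility of the greedy schedule together with the orientation of the relaxation~\eqref{eq:relax}. Your extra bookkeeping about whether $\tilde{c}_g$ measures the original or the relaxed objective is a welcome clarification but not a departure from the paper's argument.
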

\begin{proof}
The proof is similar to Lemma~\ref{lemma:performance gap 1}, thus omitted.
\end{proof}

\section{Simulation}\label{chap:sim}
In this section, we underscore the practical implications of our proposed $q$-fairness framework in achieving a flexible balance between system efficiency and fairness. Specifically, we demonstrate that increasing $q$ enhances fairness. Additionally, for sensor activation constraints, we validate the periodic structure (Theorem~\ref{thm:period} and Lemma~\ref{lemma:period}) and the performance gap (Lemma~\ref{lemma:performance gap 1} and Lemma~\ref{lemma:performance gap 2}) of the proposed algorithms.

To mathematically quantify the level of fairness, we introduce the following fairness measure.
\begin{definition}[Fairness measure~\cite{5461911}]\label{def}{\rm
 We say that ${\bm \zeta}$ is a fairer solution than ${\bm \zeta}'$ in terms of entropy if $\mathcal{H}[{\bm J}({\bm \zeta})]\geq \mathcal{H}[{\bm J}({\bm \zeta}')]$, where ${\bm J}({\bm \zeta})\triangleq\{J_1({\bm \zeta}_1),\dots,J_N({\bm \zeta}_N)\}$ and 
\begin{equation*}
\mathcal{H}[{\bm J}(\bm\zeta)]	\triangleq-\sum_{i=1}^N\frac{J_i({\bm \zeta}_i)}{\sum_i J_i({\bm \zeta}_i)}\log(\frac{J_i({\bm \zeta}_i)}{\sum_i J_i({\bm \zeta}_i)}).
\end{equation*}
}\end{definition}
\begin{remark}{\rm
Entropy, in information theory, measures the uncertainty of a random variable. Herein, we use it as a fairness measure. Intuitively, higher entropy indicates a more uniform distribution of the random variable, leading to a more balanced cost  $J_i({\bm \zeta}_i)$ across systems, and therefore a fairer allocation of resources.
}\end{remark}

{
\subsection{Communication Rate Constraints}\label{chap:sim_rate}
To enable a direct comparison with the max-min fair sensor schedules~\cite{wu2020max}, we employ the same system parameters, i.e.,
\begin{align*}
A_1&=\begin{bmatrix}1.2&0\\0 & 0 \end{bmatrix},\quad
A_2=\begin{bmatrix}1.1&1\\0 & 1 \end{bmatrix},\quad
A_3=\begin{bmatrix}1.2&1\\0 & 0.8 \end{bmatrix},\\
A_4&=\begin{bmatrix}0.8&0.6\\0 & 0.9 \end{bmatrix},\quad
A_5=\begin{bmatrix}0.3&1\\0 & 0.1 \end{bmatrix},\\
Q_1&=\begin{bmatrix}4&0\\0 & 1 \end{bmatrix},\quad
Q_2=\begin{bmatrix}1&0\\0 & 4 \end{bmatrix},\quad
Q_3=\begin{bmatrix}1&0\\0 & 4 \end{bmatrix},\\
Q_4&=\begin{bmatrix}16&0\\0 & 1 \end{bmatrix},\quad
Q_5=\begin{bmatrix}0.3&0\\0 & 1.2 \end{bmatrix},
\end{align*}
and $C_i=R_i=I_2$ for $i=1,\dots,5$. We set $R=Z=2$.

Fig.~\ref{fig:cost} demonstrates the costs of different sensors for different $q$ values. When $q=0$, the proposed $q$-fainress framework reduces to~\eqref{eq:E}, minimizing the total cost but resulting in an uneven cost distribution. As $q$ increases, the solution converges to the max-min fair schedule~\cite{wu2020max}, achieving a fair allocation of the communication rate. This trade-off between is further illustrated in Fig.~\ref{fig:cost_entr}: higher $q$ increases the total cost (reducing efficiency) but also increases the entropy (improving fairness).

\begin{figure}[!htbp]
	\centering
	\includegraphics[width=\linewidth]{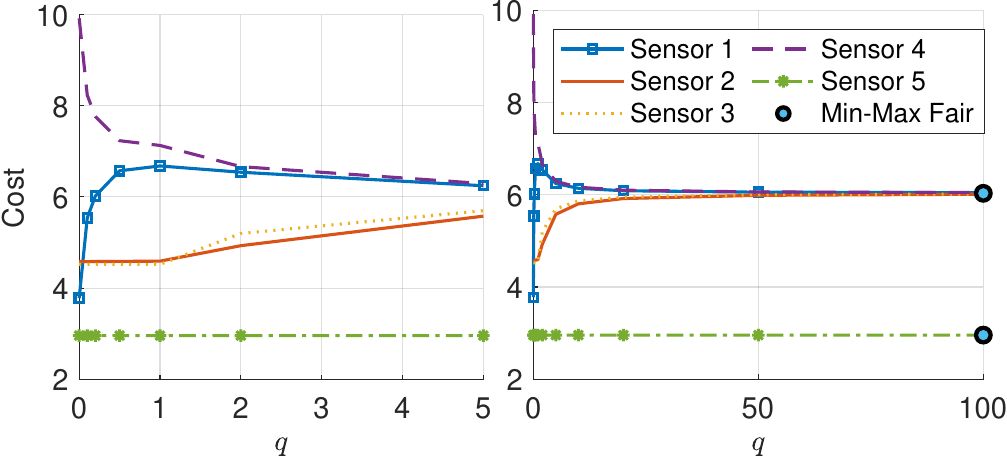}
	\caption{Cost of sensors for different $q$ values in Case 1. The max-min fair solution from~\cite{wu2020max} is shown for comparison.}
	\label{fig:cost}
\end{figure}

\begin{figure}[!htbp]
	\centering
	\includegraphics[width=\linewidth]{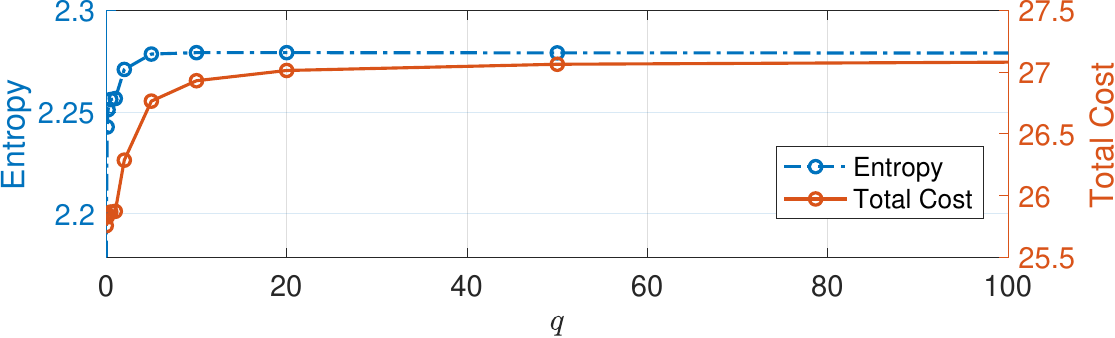}
	\caption{Total cost and entropy for different $q$ values in Case 1.}
	\label{fig:cost_entr}
\end{figure}
\subsection{Sensor Activation Constriants}
We modify some parameters while keeping all the others as defined in Section~\ref{chap:sim_rate}. The modified parameters are:
\begin{equation*}
A_3=\begin{bmatrix}1.1 & 0\\ 0 & 0\end{bmatrix},\quad
Q_3=\begin{bmatrix}1 &0\\ 0 &1\end{bmatrix},\quad
Q_5=\begin{bmatrix}2 &0\\ 0 &8\end{bmatrix},
\end{equation*}
and $R_3 = 10I_2$, $R_5=5I_2$.

\subsubsection{Periodic Structure} A realization of the action trajectory of the $2$nd sensor when $q=2$ is provided in Fig.~\ref{fig:period}. Note that both the MDP and greedy schedules exhibit asymptotic periodicity, which is aligned with Theorem~\ref{thm:period} and Lemma~\ref{lemma:period}.
\begin{figure}[!htbp]
	\centering
	\includegraphics[width=\linewidth]{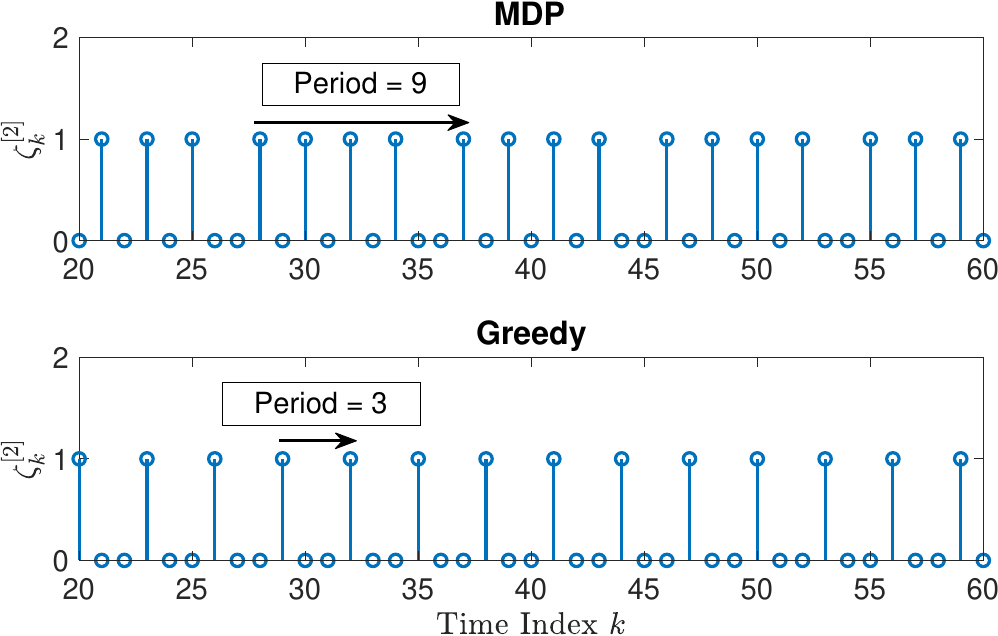}
	\caption{Schedule of the $2$nd sensor when $q=2$ in Case 2.}
	\label{fig:period}
\end{figure}

\subsubsection{MDP versus Greedy} Table~\ref{tb} provides a detailed comparison of the MDP method and the greedy algorithm. Similar to Case 1 (communication rate constraints), increasing the value of $q$ results in a rising trend in both entropy and total cost for both methods. Additionally, $c_{\pi^{\star}}/{g(\bm{\tilde{r}}^{\star})}\geq 1$ and ${\tilde{c}_g}/{g(\bm{\tilde{r}}^{\star})}\geq 1$, which are aligned with the results in Lemma~\ref{lemma:performance gap 1} and Lemma~\ref{lemma:performance gap 2}.}
\begin{table}[!htbp]
\centering
\caption{Comparison between MDP and the greedy algorithm in Case 2. The right column displays the relative performance, defined as $c_{\pi^{\star}}/{g(\bm{\tilde{r}}^{\star})}$ for MDP and ${\tilde{c}_g}/{g(\bm{\tilde{r}}^{\star})}$ for the greedy heuristics.}
\begin{tabular}{c|cc|cc|cc}
\hline
\multirow{2}{*}{$q$} & \multicolumn{2}{c|}{Entropy} & \multicolumn{2}{c|}{Total Cost} & \multicolumn{2}{c}{Relative} \\ \cline{2-7} 
                   & MDP         & Greedy         & MDP          & Greedy           & MDP & Greedy \\ \hline
0                  & 2.16           & 2.27        & 39.91            & 41.70         &1.01              & 1.05            \\
0.5              & 2.24          & 2.27         & 39.95           & 41.70         & 1.02           & 1.08            \\
2                  &  2.29          & 2.29         &  40.50            & 42.01       &  1.03              & 1.14           \\ 
20                  &  2.31           & 2.29         &  41.75            & 42.01        &  4.23              & 6.34            \\ \hline
\end{tabular}
\label{tb}
\end{table}

\section{Conclusion}\label{chap:conclusion}
In this article, we addressed the fair sensor scheduling problem with bandwidth constraints using the $q$-fairness framework to balance efficiency and fairness. We examined two communication constraints: communication rate constraints and sensor activation constraints. We presented the optimal strategy for communication rate constraints and proposed two suboptimal algorithms for sensor activation constraints, along with an analysis of their performance gaps compared to the optimal policy. {Future work will investigate the theoretical impact of $q$ and fair sensor scheduling under event-triggered mechanisms.}

\section{Acknowledgement}
We would like to thank Dr.~Shenghui Wu for valuable discussions on this work.

\appendices
\section{Preliminaries on $\alpha$-fairness}\label{apx:alpha}
The $\alpha$-fairness framework~\cite{mo2002fair} is a popular approach in resource allocation problems. It defines an overall utility function\footnote{The utility functions can be regarded as the negative of the cost functions.} as the sum of individual utilities, calculated by
\begin{equation*}
f_{\alpha}(x)=
\begin{cases}
\log(x), &\text{if $\alpha=1$};\\
\frac{x^{1-\alpha}}{1-\alpha},&\text{if $\alpha\in[0,1)\cup(1,\infty)$},
\end{cases}
\end{equation*}
where $0\leq \alpha < \infty$, and $x$ is the allocated resources of each user. The goal is to find an allocation strategy that maximizes the overall utility. By adjusting the value of $\alpha$, the framework balances fairness and efficiency~\cite{5461911}. In particular, it results in no fairness when $\alpha=0$ and harmonic fairness~\cite{dashti2013harmonic} when $\alpha=2$. In the multiclass fluid model~\cite{mo2002fair}, it achieves proportional fairness~\cite{kelly1997charging} for $\alpha=1$ and max-min fairness~\cite{radunovic2007unified} as $\alpha\to\infty$. 

Unlike $\alpha$-fairness that captures efficiency by {\it maximizing utilities}, our $q$-fariness approach focuses on {\it minimizing costs}. 
\section{Proof of Lemma~\ref{lemma:converge}}\label{apx:converge}
Since $f_q(\cdot)$ is convex and non-decreasing and is composed with $\tilde{J}_i(\tilde{r}_i)$, which is piecewise linear~\cite{chakravorty2017fundamental}, the resulting optimization problem~\eqref{eq:opt} is convex.

Denote ${\bm z}=[\tilde{\bm r}^T~\bm{\nu}^T]^T$ and let $\bm{z}^{\star}=[(\tilde{\bm r}^{\star})^T~(\bm{\nu}^{\star})^T]^T$ be any pair of optimal variables. By substituting~\eqref{eq:primal_var}-\eqref{eq:dual_var}, we obtain:
\begin{equation*}
\begin{aligned}
	&\|\bm{z}(t+1)-\bm{z}^{\star}\|_2^2\\
=	&\|\bm{z}(t)-\bm{z}^{\star}\|_2^2+w^2(t)\|T(\bm{z}(t))\|_2^2\\
	&\quad-2w(t)T(\bm{z}(t))^T[\bm{z}(t)-\bm{z}^{\star}]\\
=	&\|\bm{z}(t)-\bm{z}^{\star}\|_2^2+\gamma^2(t)-2\gamma(t)\frac{T^T(\bm{z}(t))}{\|T(\bm{z}(t))\|_2}(\bm{z}(t)-\bm{z}^{\star}).
\end{aligned}
\end{equation*}
Summing over $t$, we arrive at
\begin{equation}\label{eq:sum}
\begin{aligned}
&\|\bm{z}(t+1)-\bm{z}^{\star}\|_2^2+\sum_{j=1}^t\gamma_j\frac{T^T(\bm{z}(j))}{\|T(\bm{z}(j))\|_2}(\bm{z}(j)-\bm{z}^{\star})\\
&=\|\bm{z}(1)-\bm{z}^{\star}\|_2^2+\sum_{j=1}^t\gamma^2(j).
\end{aligned}
\end{equation}
Since $\mathcal{A}\bm{\tilde{r}}^{\star}=\bm{b}$, we have
\begin{equation}\label{eq:limit}
\begin{aligned}
&T^T(\bm{z}(t))(\bm{z}(t)-\bm{z}^{\star})\\
\geq& g(\bm{\tilde{r}}(t))-g(\bm{\tilde{r}}^{\star})+(\bm{\nu}^{\star})^T(\mathcal{A}\bm{\tilde{r}}(t)-\bm{b})+\alpha\|(\mathcal{A}\bm{\tilde{r}}(t)-\bm{b})_+\|_2^2\\
=&\mathcal{L}(\bm{\tilde{r}}(t),\bm{\nu}^{\star})-\mathcal{L}(\bm{\tilde{r}}^{\star},\bm{\nu}^{\star})+\frac{\alpha}{2}\|(\mathcal{A}\bm{\tilde{r}}(t)-\bm{b})_+\|_2^2\geq0,
\end{aligned}
\end{equation}
where the first inequality follows from the properties of subgradients.

Thus, the second term on the left-hand side of~\eqref{eq:sum}, i.e.,
\begin{equation*}
s(t)=\sum_{j=1}^t\gamma_j\frac{T^T(\bm{z}(j))}{||T(\bm{z}(j))||_2}(\bm{z}(j)-\bm{z}^{\star}),
\end{equation*}
is non-negative. Since $\|\bm{z}(1)-\bm{z}^{\star}\|_2^2<\infty$ and $\sum_{j=1}^t\gamma^2(j)<\infty$, $s(t)$ is bounded too. On the other hand, since
\begin{equation*}
\sum_{t=1}^{\infty}\gamma(t)=\infty,\quad \|T(\bm{z}(t))\|_2<\infty,
\end{equation*}
we have $\lim_{t\to\infty}T^T(\bm{z}(t))(\bm{z}(t)-\bm{z}^{\star})=0$. Taking limit of both sides of~\eqref{eq:limit}, we obtain 
\begin{equation*}
\begin{aligned}
&\lim_{t\to\infty}\mathcal{L}(\bm{\tilde{r}}(t),\bm{\nu}^{\star})=\mathcal{L}(\bm{\tilde{r}}^{\star},\bm{\nu}^{\star})=g(\tilde{\bm{r}}^{\star}),\\
&\lim_{t\to\infty}\|(\mathcal{A}\bm{\tilde{r}}(t)-\bm{b})_+\|_2=0,
\end{aligned}
\end{equation*}
which completes the proof.

\section{Proof of Theorem~\ref{thm:bellman}}\label{apx:bellman}
Before presenting the proof of Theorem~\ref{thm:bellman}, we first introduce the following lemmas.
\begin{lemma}[Lemma 5.3.1~\cite{hernandez2012discrete}]\label{lemma:pre}{\rm
Let $\{c_k\}$, $k\in\mathbb{N}$ be a sequence of nonnegative numbers. Then, the following inequalities hold:
\begin{equation*}
\begin{aligned}
\liminf_{T\to\infty}&\frac{1}{T+1}\sum_{k=0}^{T}c_k\leq\liminf_{\delta\to1^-}(1-\delta)\sum_{k=0}^{\infty}\delta^kc_k\\
&\leq\limsup_{\delta\to1^-}(1-\delta)\sum_{k=0}^{\infty}\delta^kc_k\leq\limsup_{T\to\infty}\frac{1}{T+1}\sum_{k=0}^{T}c_k.
\end{aligned}
\end{equation*}
}\end{lemma}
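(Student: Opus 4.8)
The plan is to prove only the two outer inequalities, since the middle one, $\liminf_{\delta\to1^-}(1-\delta)\sum_{k}\delta^k c_k\leq\limsup_{\delta\to1^-}(1-\delta)\sum_{k}\delta^k c_k$, is immediate. The engine of the whole argument is to rewrite the Abel mean $A(\delta)\triangleq(1-\delta)\sum_{k=0}^{\infty}\delta^k c_k$ as a convex combination of the Cesàro averages $\sigma_k\triangleq\frac{1}{k+1}\sum_{j=0}^{k}c_j$, after which both inequalities follow from a standard head/tail split.

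First I would establish the key identity. Writing $S_k=\sum_{j=0}^{k}c_j=(k+1)\sigma_k$ and summing by parts (using $c_k=S_k-S_{k-1}$ with $S_{-1}=0$), for every $0<\delta<1$ one obtains $\sum_{k=0}^{\infty}\delta^k c_k=(1-\delta)\sum_{k=0}^{\infty}\delta^k S_k$, hence
\begin{equation*}
A(\delta)=(1-\delta)^2\sum_{k=0}^{\infty}(k+1)\,\delta^k\,\sigma_k.
\end{equation*}
Because all $c_k\geq 0$, every term is nonnegative and the manipulation is valid in the extended reals (both sides are simultaneously finite or $+\infty$). Setting $w_k(\delta)\triangleq(1-\delta)^2(k+1)\delta^k$, the elementary identity $\sum_{k=0}^{\infty}(k+1)\delta^k=(1-\delta)^{-2}$ shows $w_k(\delta)\geq 0$ and $\sum_{k=0}^{\infty}w_k(\delta)=1$, so $A(\delta)$ is exactly a weighted average of the $\sigma_k$.

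Next I exploit two features of these weights: for each fixed $k$, $w_k(\delta)\to 0$ as $\delta\to1^-$, so any finite initial block of the $\sigma_k$ contributes nothing in the limit; and the total mass escapes to arbitrarily large indices. For the upper bound set $L=\limsup_{T\to\infty}\sigma_T$ (trivial if $L=\infty$); given $\epsilon>0$ choose $K$ with $\sigma_k\leq L+\epsilon$ for $k>K$, split $A(\delta)=\sum_{k\leq K}w_k(\delta)\sigma_k+\sum_{k>K}w_k(\delta)\sigma_k$, bound the tail by $(L+\epsilon)\sum_{k>K}w_k(\delta)\leq L+\epsilon$, and let $\delta\to1^-$ so the finite head vanishes; as $\epsilon$ is arbitrary this gives $\limsup_{\delta\to1^-}A(\delta)\leq L$. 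The lower bound is symmetric: with $\ell=\liminf_{T\to\infty}\sigma_T$ and $\sigma_k\geq\ell-\epsilon$ for $k>K$, I discard the nonnegative head and use $\sum_{k>K}w_k(\delta)\to 1$ to conclude $\liminf_{\delta\to1^-}A(\delta)\geq\ell-\epsilon$. Here the nonnegativity $\sigma_k\geq 0$ (from $c_k\geq 0$) is precisely what licenses dropping the head terms when forming a lower bound.

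The hard part will be purely the limit-interchange bookkeeping rather than any deep estimate: I must justify the summation-by-parts identity and the normalization $\sum_k w_k(\delta)=1$ for every $\delta<1$, and handle the degenerate cases where the Abel or Cesàro limits are infinite so that the inequalities remain meaningful in the extended reals. Everything else reduces to the single observation that $A(\delta)$ is a probability average of the $\sigma_k$ whose mass concentrates on the tail as $\delta\to1^-$, which automatically sandwiches its limit between $\liminf_{k\to\infty}\sigma_k$ and $\limsup_{k\to\infty}\sigma_k$.
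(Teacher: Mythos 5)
Your proposal is correct, but note that the paper itself offers no proof of this statement: it is imported verbatim as Lemma~5.3.1 of the cited reference (Hern\'andez-Lerma and Lasserre) and used as a black box in the proof of Theorem~\ref{thm:bellman}. What you have written is the classical Abelian-theorem argument, and it is essentially the standard textbook proof of that result. The key identity is right: summation by parts with $S_{-1}=0$ gives $\sum_{k\ge0}\delta^k c_k=(1-\delta)\sum_{k\ge0}\delta^k S_k$, all manipulations being legitimate in $[0,+\infty]$ because every term is nonnegative, and $\sum_{k\ge0}(k+1)\delta^k=(1-\delta)^{-2}$ makes $w_k(\delta)=(1-\delta)^2(k+1)\delta^k$ a genuine probability weight. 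The head/tail split then delivers both outer inequalities exactly as you describe, with nonnegativity of the $\sigma_k$ licensing the discard of the head in the lower bound. Two small points to tighten when writing it out: (i) in the lower bound, the phrase ``$\sigma_k\geq\ell-\epsilon$ for $k>K$'' does not parse when $\ell=+\infty$; handle that case by choosing $K$ with $\sigma_k\geq M$ for an arbitrary $M$ and concluding $\liminf_{\delta\to1^-}A(\delta)\geq M$; (ii) the claim that the finite head $\sum_{k\leq K}w_k(\delta)\sigma_k$ vanishes as $\delta\to1^-$ uses that each $c_k$ (hence each $\sigma_k$) is a finite real number, which is implicit in ``sequence of nonnegative numbers'' but worth stating. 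With those caveats the argument is complete and self-contained, which is arguably a service the paper does not provide.
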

\begin{lemma}\label{lemma:bounded cost}{\rm
No policy $\pi$ that results in an unbounded average cost $c_{\pi}(\phi_0)$ can be optimal.
}\end{lemma}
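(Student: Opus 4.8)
The plan is to prove the statement by exhibiting a single admissible policy whose average cost is finite; since optimality here means minimizing $c_\pi(\phi_0)$ over all $\pi$ and all $\phi_0$, the mere existence of one finite-cost policy immediately disqualifies any policy with an infinite (unbounded) average cost. Concretely, I would first construct a periodic round-robin schedule $\pi_{\mathrm{rr}}$ that, at each time $k$, activates exactly the single sensor indexed by $(k \bmod N)+1$. Because $Z\in\mathbb{N}_+$, we have $\sum_{i=1}^N\zeta_k^{[i]}=1\le Z$ for every $k$, so $\pi_{\mathrm{rr}}$ is admissible under the activation constraint~\eqref{eq:case2}.

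Next I would bound the one-stage cost along $\pi_{\mathrm{rr}}$. Under round-robin every sensor is scheduled at least once in each window of $N$ consecutive steps, so after the first full cycle $\tau_k^{[i]}\le N-1$ for all $i$ and all sufficiently large $k$. Since $\bar P_i$ is finite and $h_i(\cdot)$ maps finite matrices to finite matrices, $\mathrm{Tr}[h_i^{(j)}(\bar P_i)]$ is finite for each finite $j$; combined with the continuity of $f_q(\cdot)$, each $c_i(\tau)$ is finite for $\tau\in\{0,\dots,N-1\}$. Taking the maximum over this finite index set yields a uniform bound $c(\phi_k)=\sum_{i=1}^N c_i(\tau_k^{[i]})\le \bar c<\infty$.

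I would then conclude that the average cost is finite: the bounded one-stage cost gives $\frac{1}{T+1}\mathbb{E}_{\phi_0}^{\pi_{\mathrm{rr}}}\big[\sum_{k=0}^T c(\phi_k)\big]\le \bar c$ up to a vanishing contribution from the finite transient, so $c_{\pi_{\mathrm{rr}}}(\phi_0)\le \bar c<\infty$ after taking the $\limsup$. Hence the optimal cost obeys $c_{\pi^\star}(\phi_0)\le c_{\pi_{\mathrm{rr}}}(\phi_0)<\infty$, and any policy $\pi$ with $c_\pi(\phi_0)=\infty$ satisfies $c_\pi(\phi_0)>c_{\pi_{\mathrm{rr}}}(\phi_0)$, violating the optimality inequality $c_\pi(\phi_0)\le c_{\pi'}(\phi_0)$ for $\pi'=\pi_{\mathrm{rr}}$; thus $\pi$ cannot be optimal.

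The main obstacle I anticipate is the bookkeeping around the initial transient: for an arbitrary starting state $\phi_0$ some $\tau_0^{[i]}$ may already be large, so the bound $\tau_k^{[i]}\le N-1$ holds only after the first round-robin cycle. This is handled cleanly because the cost over any fixed finite horizon is divided by $T+1\to\infty$ and therefore does not affect the $\limsup$; the genuine content reduces to the uniform boundedness of each $c_i$ over a finite range of delays, which is exactly where the continuity of $f_q$ and the finiteness of $h_i^{(j)}(\bar P_i)$ enter.
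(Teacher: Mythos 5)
Your proposal is correct and follows essentially the same route as the paper: the paper's proof likewise exhibits a policy under which every sensor transmits at least once every $N$ steps (your round-robin schedule is a concrete instance), deduces boundedness of the per-stage cost and hence of the average cost, and concludes that no policy with unbounded average cost can be optimal. Your version simply spells out the admissibility check, the uniform bound over the finite delay range, and the vanishing transient, all of which the paper leaves implicit.
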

\begin{proof}
It is straightforward to verify that there exists a policy under which each sensor transmits its local estimate to the remote state estimator at least once every $N$ steps. Such a policy ensures that ${\rm Tr}(P_{k}^{[i]})$ remains bounded, leading to a bounded average cost. Therefore, any policy $\pi$ that induces an unbounded average cost cannot be optimal.
\end{proof}

We are now prepared to present the proof of Theorem~\ref{thm:bellman}.

Given a policy $\pi$, for $\phi_0=\phi$, the discounted cost is
\begin{equation*}
V_{\delta}(\pi,\phi)=\sum_{k=0}^{\infty}\delta^k\mathbb{E}^{\pi}_{\phi}[c(\phi_k)],
\end{equation*}
where $\delta\in(0,1)$ is the discount factor. Denote $V_{\delta}^\star(\phi)=\inf_{\pi\in{\Pi}}V_{\delta}(\pi,\phi)$.

According to to~\cite[Theorem 5.5.4 and Remark 5.5.2]{hernandez2012discrete}, to prove Theorem~\ref{thm:bellman}, only the following items need to be verified:
\begin{enumerate}[label=(\Roman*)]
\item The one-stage cost function $c(\phi)$ is lower semicontinuous and nonnegative.
\item The probability $\mathbb{P}(\phi'|\phi,a)$ is strongly continous.
\item There exists a state $\psi\in\mathbb{S}$ along with constants $\underline{\delta}\in(0,1)$ and $U\geq0$ such that

	(a) $(1-\delta)V_{\delta}^\star(\psi)\leq U,~\forall\delta\in[\underline{\delta},1)$.
	
	Moreover, there exists a constant $L\geq0$ and a nonnegative measurable function $l(\cdot)$ on $\mathbb{S}$ such that
	
	(b) $-L\leq V_{\delta}^\star(\phi)-V_{\delta}^\star(\psi)\leq l(\phi),~\forall \phi\in\mathbb{S},~\delta\in[\underline{\delta},1)$.
\item For any $\phi\in\mathbb{S}$ and $a\in\mathbb{A}$, the above function $l(\cdot)$ satisfies $\sum_{\phi'\in\mathbb{S}}l(\phi')\mathbb{P}(\phi'|\phi,a)<\infty$.
\item The state space $\mathbb{S}$ is a denumerable set.
\end{enumerate}

It is easy to verify that $c(\phi)$ is lower bounded by zero. Moreover, the state space $\mathbb{S}$ is countably infinite and the action space $\mathbb{A}$ is finite. Therefore, (I)(II) and (V) hold trivially.

Lemma~\ref{lemma:pre} establishes the following inequality by letting $c_k=\mathbb{E}_{\phi}^{\pi^\star}[c(\phi_k)]$:
\begin{equation*}
\limsup_{\delta\to1^-}(1-\delta)V_{\delta}^\star(\phi)\leq\limsup_{T\to\infty}\frac{1}{T+1}\sum_{k=0}^{T}\mathbb{E}_{\phi}^{\pi^\star}[c(\phi_k)].
\end{equation*}
By Lemma~\ref{lemma:bounded cost}, $\limsup_{T\to\infty}\frac{1}{T+1}\sum_{k=0}^{T}\mathbb{E}_{\phi}^{\pi^\star}[c(\phi_k)]$ is bounded. Therefore, for any state $\phi$, there exists $U\geq 0$ such that
\begin{equation}\label{eq:U}
(1-\delta)V_{\delta}^\star(\phi)\leq U,
\end{equation}
implying (III-a) holds for any state $\phi$ and, trivially, for $\psi$.

We pick $\psi=(\tau^{[1]},\dots,\tau^{[N]})$,
where
\begin{equation*}
\tau^{[i]}=\begin{cases}
\lfloor \frac{i-1}{Z}\rfloor,	&\text{if $i\leq\lfloor\frac{N}{Z}\rfloor Z$};\\
\lfloor \frac{N-1}{Z}\rfloor,	&\text{otherwise}.
\end{cases}
\end{equation*}
Let $\underline{k}=\lfloor\frac{N}{Z}\rfloor$. It can be verified that starting from any state $\phi$, there exists a policy $\hat{\pi}$ such that $\psi$ is reached in at most $\underline{k}+1$ steps\footnote{The policy $\hat{\pi}$ can be constructed as: $\zeta_{\underline{k}}^{[i]}=1$ for $1\leq i\leq Z$; $\zeta_{\underline{k}-1}^{[i]}=1$ for $Z+1\leq i\leq 2Z$;\dots;$\zeta_0^{[i]}=1$ for $i\geq\underline{k}Z+1$.}. Then, for any state $\phi$, the following holds:
\begin{equation*}
\begin{aligned}
V_{\delta}^{*}(\phi)	&\leq\mathbb{E}_{\phi}^{\hat{\pi}}\left[\sum_{k=0}^{\underline{k}-1}\delta^kc(\phi_k)\right]+\mathbb{E}^{\pi^\star}_{\phi}\left[\sum_{k=\underline{k}}^{\infty}\delta^{k}c(\phi_k)|\phi_{\underline{k}}=\psi\right]\\
				&=\mathbb{E}_{\phi}^{\hat{\pi}}\left[\sum_{k=0}^{\underline{k}-1}\delta^kc(\phi_k)\right]+\delta^{\underline{k}}V_{\delta}^{*}(\psi).
\end{aligned}
\end{equation*}
Obviously, the first term is bounded. Therefore, letting $l(\phi)=\mathbb{E}_{\phi}^{\hat{\pi}}\left[\sum_{k=0}^{\underline{k}-1}\delta^kc(\phi_k)\right]$, we obtain 
\begin{equation}\label{eq:3b-1}
\begin{aligned}
V_{\delta}^{*}(\phi)	&\leq l(\phi)+\delta^{\underline{k}}V_{\delta}^{*}(\psi)\leq l(\phi)+V_{\delta}^{*}(\psi).
\end{aligned}
\end{equation}

On the other hand, denote $\underline{v}\triangleq\inf_{\phi\in\mathbb{S}}V^\star_{\delta}(\phi)$. Starting from an arbitrary $\phi_0$, consider a policy (not necessarily optimal) under which it achieves $V_\delta^\star(\phi_{\bar{k}})\leq\underline{v}+\Delta$ at some time $\bar{k}$, where $\Delta\geq0$ is fixed value. From \cite[Lemma 4.1]{schal1993average}, we have
\begin{equation*}
0\leq V^\star_\delta(\phi)-\underline{v}\leq\mathbb{E}_{\phi}^{\pi^\star}\left[\sum_{k=0}^{\bar{k}-1}\delta^kc(\phi_k)\right].
\end{equation*}
Therefore,
\begin{equation}\label{eq:3b-2}
V^\star_\delta(\phi)-V^\star_\delta(\psi)\geq-\mathbb{E}_{\phi}^{\pi^\star}\left[\sum_{k=0}^{\bar{k}-1}\delta^kc(\phi_k)\right].
\end{equation}
Since $\bar{k}$ is bounded~\cite{nourian2014optimal}, $\mathbb{E}_{\phi}^{\pi^\star}\left[\sum_{k=0}^{\bar{k}-1}\delta^kc(\phi_k)\right]$ is bounded. Combining~\eqref{eq:3b-1} and~\eqref{eq:3b-2}, we complete the proof of (III-b).

The condition (IV) is satisfied trivially since, for any state $\phi$, the set of possible next states $\phi'$ is finite.

With all conditions (I)–(V) now verified, the proof of Theorem~\ref{thm:bellman} is complete.

\section{Proof of Theorem~\ref{thm:period}}\label{apx:period}
Under the optimal policy $\pi^\star$, each sensor $i$ falls into one of two categories:
\begin{itemize}
\item Quiet sensor: After some finite time, $\zeta_k^{(i)}=0$ for all $k$.
\item Transmitting sensor: $\zeta_{k}^{[i]}=1$ occurs infinitely often.
\end{itemize}
Let $\mathcal{I}_{q}$ and $\mathcal{I}_{t}$ denote the set of quiet and transmitting sensors, respectively, where $\mathcal{I}_{q}\cup\mathcal{I}_{t}=\{1,\dots,N\}$. Since sensors in  $\mathcal{I}_{q}$ are never chosen to transmit, we can focus solely on $\mathcal{I}_{t}$ without altering the problem. Specifically, the task reduces to selecting at most $Z$ sensors to transmit out from $\mathcal{I}_t$.

For the reduced problem, $\tau_k^{[i]}=0$ every time $\zeta_k^{[i]}=1$, which implies the induced Markov chain has a finite state space. Moreover, since $\pi^{\star}$ is deterministic, there must be a recurrent state that the system re-enters after a finite number of steps. This ensures that the system follows a periodic trajectory.

On the other hand, for $\mathcal{I}_{q}$, since $\zeta_k^{(i)}=0$ for all $k$, it is periodic straightforwardly.

Since both for sensors in  $\mathcal{I}_{t}$ and $\mathcal{I}_{q}$, the action displays a periodic nature, we complete the proof of Theorem~\ref{thm:period}.

\bibliographystyle{IEEEtran}
\bibliography{ref.bib}

\begin{thebibliography}{10}
\providecommand{\url}[1]{#1}
\csname url@samestyle\endcsname
\providecommand{\newblock}{\relax}
\providecommand{\bibinfo}[2]{#2}
\providecommand{\BIBentrySTDinterwordspacing}{\spaceskip=0pt\relax}
\providecommand{\BIBentryALTinterwordstretchfactor}{4}
\providecommand{\BIBentryALTinterwordspacing}{\spaceskip=\fontdimen2\font plus
\BIBentryALTinterwordstretchfactor\fontdimen3\font minus
  \fontdimen4\font\relax}
\providecommand{\BIBforeignlanguage}[2]{{%
\expandafter\ifx\csname l@#1\endcsname\relax
\typeout{** WARNING: IEEEtran.bst: No hyphenation pattern has been}%
\typeout{** loaded for the language `#1'. Using the pattern for}%
\typeout{** the default language instead.}%
\else
\language=\csname l@#1\endcsname
\fi
#2}}
\providecommand{\BIBdecl}{\relax}
\BIBdecl

\bibitem{yick2008wireless}
J.~Yick, B.~Mukherjee, and D.~Ghosal, ``Wireless sensor network survey,''
  \emph{Computer Networks}, vol.~52, no.~12, pp. 2292--2330, 2008.

\bibitem{mahmoud2018fundamental}
M.~S. Mahmoud and M.~M. Hamdan, ``Fundamental issues in networked control
  systems,'' \emph{IEEE/CAA Journal of Automatica Sinica}, vol.~5, no.~5, pp.
  902--922, 2018.

\bibitem{han2017optimal}
D.~Han, J.~Wu, H.~Zhang, and L.~Shi, ``Optimal sensor scheduling for multiple
  linear dynamical systems,'' \emph{Automatica}, vol.~75, pp. 260--270, 2017.

\bibitem{chakravorty2017fundamental}
J.~Chakravorty and A.~Mahajan, ``Fundamental limits of remote estimation of
  autoregressive {Markov} processes under communication constraints,''
  \emph{IEEE Transactions on Automatic Control}, vol.~62, no.~3, p. 1109, 2017.

\bibitem{leong2016sensor}
A.~S. Leong, S.~Dey, and D.~E. Quevedo, ``Sensor scheduling in variance-based
  event-triggered estimation with packet drops,'' \emph{IEEE Transactions on
  Automatic Control}, vol.~62, no.~4, pp. 1880--1895, 2016.

\bibitem{maity2022sensor}
D.~Maity, D.~Hartman, and J.~S. Baras, ``Sensor scheduling for linear systems:
  A covariance tracking approach,'' \emph{Automatica}, vol. 136, p. 110078,
  2022.

\bibitem{vafaee2024learning}
R.~Vafaee and M.~Siami, ``Learning-based sparse sensing with performance
  guarantees,'' \emph{IEEE Transactions on Automatic Control}, vol.~70, no.~1,
  pp. 387--402, 2024.

\bibitem{astrom2002comparison}
K.~J. Astrom and B.~M. Bernhardsson, ``Comparison of {Riemann} and {Lebesgue}
  sampling for first order stochastic systems,'' in \emph{Proceedings of the
  41st IEEE Conference on Decision and Control}, vol.~2, 2002, pp. 2011--2016.

\bibitem{han2015stochastic}
D.~Han, Y.~Mo, J.~Wu, S.~Weerakkody, B.~Sinopoli, and L.~Shi, ``Stochastic
  event-triggered sensor schedule for remote state estimation,'' \emph{IEEE
  Transactions on Automatic Control}, vol.~60, no.~10, pp. 2661--2675, 2015.

\bibitem{trimpe2014event}
S.~Trimpe and R.~D'Andrea, ``Event-based state estimation with variance-based
  triggering,'' \emph{IEEE Transactions on Automatic Control}, vol.~59, no.~12,
  pp. 3266--3281, 2014.

\bibitem{zhong2023event}
Y.~Zhong, J.~Tang, N.~Yang, D.~Shi, and L.~Shi, ``Event-triggered sensor
  scheduling for remote state estimation with error-detecting code,''
  \emph{IEEE Control Systems Letters}, vol.~7, pp. 2377--2382, 2023.

\bibitem{zhong2024event}
Y.~Zhong, L.~Huang, Y.~Mo, D.~Shi, and L.~Shi, ``Event-triggered multi-sensor
  scheduling for remote state estimation over packet-dropping networks,''
  \emph{IEEE Transactions on Signal Processing}, 2024.

\bibitem{duan2022sensor}
P.~Duan, L.~He, L.~Huang, G.~Chen, and L.~Shi, ``Sensor scheduling design for
  complex networks under a distributed state estimation framework,''
  \emph{Automatica}, vol. 146, p. 110628, 2022.

\bibitem{liu2022rollout}
H.~Liu, Y.~Li, K.~H. Johansson, J.~M{\aa}rtensson, and L.~Xie, ``Rollout
  approach to sensor scheduling for remote state estimation under integrity
  attack,'' \emph{Automatica}, vol. 144, p. 110473, 2022.

\bibitem{wu2020max}
S.~Wu, X.~Ren, Y.~Hong, and L.~Shi, ``Max--min fair sensor scheduling:
  Game-theoretic perspective and algorithmic solution,'' \emph{IEEE
  Transactions on Automatic Control}, vol.~66, no.~5, pp. 2379--2385, 2020.

\bibitem{5461911}
T.~Lan, D.~Kao, M.~Chiang, and A.~Sabharwal, ``An axiomatic theory of fairness
  in network resource allocation,'' in \emph{In Proceedings of IEEE INFOCOM},
  2010, pp. 1--9.

\bibitem{li2019fair}
T.~Li, M.~Sanjabi, A.~Beirami, and V.~Smith, ``Fair resource allocation in
  federated learning,'' in \emph{Proceedings of International Conference on
  Learning Representations}, 2020.

\bibitem{khan2016fairness}
U.~U. Khan, N.~Dilshad, M.~H. Rehmani, and T.~Umer, ``Fairness in cognitive
  radio networks: {Models}, measurement methods, applications, and future
  research directions,'' \emph{Journal of Network and Computer Applications},
  vol.~73, pp. 12--26, 2016.

\bibitem{mo2002fair}
J.~Mo and J.~Walrand, ``Fair end-to-end window-based congestion control,''
  \emph{IEEE/ACM Transactions on Networking}, vol.~8, no.~5, pp. 556--567,
  2002.

\bibitem{anderson2012optimal}
B.~D. Anderson and J.~B. Moore, \emph{Optimal Filtering}.\hskip 1em plus 0.5em
  minus 0.4em\relax Courier Corporation, 2012.

\bibitem{shi2010kalman}
L.~Shi, M.~Epstein, and R.~M. Murray, ``Kalman filtering over a packet-dropping
  network: {A} probabilistic perspective,'' \emph{IEEE Transactions on
  Automatic Control}, vol.~55, no.~3, pp. 594--604, 2010.

\bibitem{nedic2009subgradient}
A.~Nedi{\'c} and A.~Ozdaglar, ``Subgradient methods for saddle-point
  problems,'' \emph{Journal of Optimization Theory and Applications}, vol. 142,
  no.~1, pp. 205--228, 2009.

\bibitem{goffin1977convergence}
J.-L. Goffin, ``On convergence rates of subgradient optimization methods,''
  \emph{Mathematical Programming}, vol.~13, no.~1, pp. 329--347, 1977.

\bibitem{hernandez2012discrete}
O.~Hern{\'a}ndez-Lerma and J.~B. Lasserre, \emph{Discrete-time {Markov} control
  processes: basic optimality criteria}.\hskip 1em plus 0.5em minus 0.4em\relax
  Springer Science \& Business Media, 2012, vol.~30.

\bibitem{bertsekas2011dynamic}
D.~P. Bertsekas, \emph{Dynamic Programming and Optimal Control}.\hskip 1em plus
  0.5em minus 0.4em\relax Athena Scientific, 2012, vol.~2.

\bibitem{orihuela2014periodicity}
L.~Orihuela, A.~Barreiro, F.~G{\'o}mez-Estern, and F.~R. Rubio, ``Periodicity
  of {Kalman}-based scheduled filters,'' \emph{Automatica}, vol.~50, no.~10,
  pp. 2672--2676, 2014.

\bibitem{dashti2013harmonic}
M.~Dashti, P.~Azmi, and K.~Navaie, ``Harmonic mean rate fairness for cognitive
  radio networks with heterogeneous traffic,'' \emph{Transactions on Emerging
  Telecommunications Technologies}, vol.~24, no.~2, pp. 185--195, 2013.

\bibitem{kelly1997charging}
F.~Kelly, ``Charging and rate control for elastic traffic,'' \emph{European
  Transactions on Telecommunications}, vol.~8, no.~1, pp. 33--37, 1997.

\bibitem{radunovic2007unified}
B.~Radunovic and J.-Y. Le~Boudec, ``A unified framework for max-min and min-max
  fairness with applications,'' \emph{IEEE/ACM Transactions on Networking},
  vol.~15, no.~5, pp. 1073--1083, 2007.

\bibitem{schal1993average}
M.~Sch{\"a}l, ``Average optimality in dynamic programming with general state
  space,'' \emph{Mathematics of Operations Research}, vol.~18, no.~1, pp.
  163--172, 1993.

\bibitem{nourian2014optimal}
M.~Nourian, A.~S. Leong, and S.~Dey, ``Optimal energy allocation for kalman
  filtering over packet dropping links with imperfect acknowledgments and
  energy harvesting constraints,'' \emph{IEEE Transactions on Automatic
  Control}, vol.~59, no.~8, pp. 2128--2143, 2014.

\end{thebibliography}

\end{document}